\documentclass[journal,twoside,web]{ieeecolor}
\usepackage{generic}
\usepackage{cite}
\usepackage{amsmath,amssymb,amsfonts}
\usepackage{algorithmic}
\usepackage{graphicx}
\usepackage{algorithm,algorithmic}
\usepackage{hyperref}
\hypersetup{hidelinks=true}
\usepackage{textcomp}
\usepackage{epstopdf}
\newtheorem{assum}{Assumption}
\newtheorem{thm}{Theorem}
\newtheorem{prob}{Problem}
\newtheorem{lem}{Lemma}
\newtheorem{rem}{Remark}

\def\BibTeX{{\rm B\kern-.05em{\sc i\kern-.025em b}\kern-.08em
    T\kern-.1667em\lower.7ex\hbox{E}\kern-.125emX}}
\markboth{\hskip25pc}
{Zeze Chang \MakeLowercase{\textit{et al.}}: Localized Data-driven Consensus Control}
\begin{document}
\title{Localized Data-driven Consensus Control}
\author{Zeze Chang, Junjie Jiao, \IEEEmembership{Member, IEEE}, and Zhongkui Li, \IEEEmembership{Senior Member, IEEE}
\thanks{
This work was supported in part by the National Natural Science Foundation of China under grants U2241214, T2121002, and 62373008. }
\thanks{ 
Zeze Chang and Zhongkui Li are with the State Key Laboratory for Turbulence and Complex Systems, Department of Mechanics and Engineering Science,
College of Engineering, Peking University, Beijing 100871, China (e-mail: changzeze@stu.pku.edu.cn; zhongkli@pku.edu.cn).
}
\thanks{Junjie Jiao is with Chair of Information-oriented Control, Technical University of Munich, Munich 80333, Germany (email:j.jiao@tum.de).}
}

\maketitle

\begin{abstract}
This paper considers a localized data-driven consensus problem for leader-follower multi-agent systems with unknown discrete-time agent dynamics, where each follower computes its local control gain using only their locally collected state and input data.  Both noiseless and noisy data-driven consensus protocols are presented, which can handle the challenge of the heterogeneity in control gains caused by the localized data sampling and achieve leader-follower consensus. 
The design of these data-driven consensus protocols involves low-dimensional linear matrix inequalities. In addition, the results are extended to the case where only the leader's data are collected and exploited. The effectiveness of the proposed methods is illustrated via simulation examples.
\end{abstract}

\begin{IEEEkeywords}
Data-driven control, multi-agent system, distributed control, consensus.
\end{IEEEkeywords}

\section{Introduction} \label{sec 1}
Over the past decades, there has been a growing interest in data-driven control due to its advantages of not requiring precise system models that often contain redundant parameters and are difficult to be identified accurately through experiments. Various methods have been presented to solve data-driven control problems, including model-free adaptive control \cite{Lei2020data,Li2022Distributed}, iterative learning control \cite{Hui2023Data,Zheng2023Data}, reinforcement learning \cite{Wang2021Data,Dong2023Data}, and Willems' fundamental lemma \cite{Willems2005,Waarde2020Willems,Pan2023On}, the last of which does not rely on parametric system identification and has rigorous stability analysis and thereby exhibits a promising prospect for addressing widespread black-box systems.

Quite a few studies have emerged on data-driven control for linear systems utilizing the paradigm proposed in \cite{Willems2005}. In \cite{DePersis2020}, several problems including stabilization, optimality, and robust control are tackled for general linear systems, where the controllers are all devised based on sampled data. The control inputs in \cite{DePersis2020}, nevertheless, are supposed to be persistently exciting (PE) during the data-sampling process such that system matrices can be explicitly derived from the sampled data. An informativity approach is introduced in \cite{Waarde2020}, providing rigorous analysis to determine the necessity of the PE condition, and the data-driven algebraic regulator problem is investigated in \cite{Trentelman2022} by adopting this informativity approach. Data-driven model reduction and data-driven control are studied in \cite{Monshizadeh2020} to further relax some rank requirements of the collected data. 

Due to the existence of external interference, sampled data are very likely to be affected by noises \cite{Furieri2023Near}. As a result, it makes great sense to conduct system analysis and controller design directly from noise-corrupted data. To solve the noisy data-driven control problems, many robust control tools are utilized, including linear fractional transformations \cite{Berberich2020,Sinha2021LFT,Berberich2023Combining}, S-procedure \cite{Waarde2022,Bisoffi2021,Waarde2022Dissipativity,Guo2022}, system level synthesis \cite{Anderson2019System,Xue2020Data}, Petersen's lemma \cite{Bisoffi2022data}, and Finsler's lemma \cite{Waarde2021Finsler}. 
The quadratic regulator problem is studied in \cite{DePersis2021Low,Dörfler2023On,Dai2023Data} for linear systems with noisy data, in which semi-definite program techniques are utilized and sufficient conditions are provided to return a stabilizing controller with guaranteed relative errors.

More recently, efforts have been made to extend the above results for data-driven control of single systems to the setting of network systems, of which the structural constraint induced by the network topology and the diverse data locally collected by each agent impose inherent challenges. 
In \cite{Jiao2021}, data-based output synchronization is studied for heterogeneous leader-follower multi-agent systems, where the output regulation equations are solved solely by data.  However, the noises in the agent dynamics are assumed to be known exactly, which is unrealistic. Additionally,  in \cite{Allibhoy2021}, a distributed predictive control scheme is devised based on sampled data to stabilize coupled network systems. More related results can be found in \cite{Baggio2021data} and \cite{Liu2020Predictive}. Furthermore, interesting studies in \cite{Zhang2023} and  \cite{Wang2022} present data-driven leader-follower consensus and event-triggered consensus algorithms for multi-agent systems, respectively. Nonetheless, in these works, data are collected only at one agent, based on which a local control gain is computed and shared across all agents. This is essentially a {\em centralized} design.

Motivated by the above discussions, in the present paper we deal with distributed data-driven control of homogeneous leader-follower multi-agent systems from a new perspective. 
In particular, we consider the case where each follower collects its own local data and uses such locally collected noisy/noiseless data to design its own local control gain. This, different from those centralized design methods using only one agent's data as in \cite{Zhang2023,Wang2022}, is called {\em localized} data-driven control.
Note that in this case, the local control gains of the followers are, in general, different from each other,   
introducing heterogeneity, which makes the classic decomposition methods for the homogeneous case not applicable directly.
To overcome this difficulty,  we propose a novel distributed data-driven consensus protocol, by additionally synchronizing these local gains, which is shown to achieve leader-follower consensus.

We consider both cases of noiseless data and noisy data. Specifically, we devise noiseless data-driven consensus protocols based on the Riccati-based consensus region and develop noisy data-driven protocols leveraging the S-procedure technique and the informativity approach. Additionally, we provide another type of data-driven consensus algorithm, where only the leader's data are exploited.  Rigorous analyses are provided to show the convergence of the closed-loop network system, without requiring the system matrices to be known or the PE condition to hold. Simulation examples are also provided to verify the effectiveness of the proposed algorithms.

The main contributions of this paper are at least three-fold: 1)  The localized data-driven consensus algorithms presented in this paper, circumventing the centralized design approach in \cite{Zhang2023,Wang2022} which necessitates an identical data-based gain for all agents, determine local gains for each agent in a coordinated way using locally collected data, which is consistent with the essential nature of distributed control. 
2) Data-driven consensus is investigated under a much relaxed assumption on the noisy data. Contrary to the previous work \cite{Jiao2021} which demands the exact prior knowledge of the noise signals, in the current paper the noise matrix can be unknown and is only required to satisfy certain bounded constraints. 
3) The data-driven consensus protocols are designed by solving low-dimensional linear matrix inequalities (LMIs) in this paper. By contrast,  the LMI conditions obtained in \cite{Wang2022}, proportional to the scale of the network, are generally of high dimension.

The organization of this paper is as follows.  In Section \ref{sec 3}, the problem formulation and the noiseless data-driven consensus control protocols are proposed.  Section \ref{sec 4} extends the analysis to the data-based consensus algorithm with noise-corrupted data. Section \ref{sec Centralized} provides another data-driven control architecture, where only the leader's data are exploited. Simulation results are demonstrated in Section \ref{sec 5} to illustrate the effectiveness of the proposed methods.  Section \ref{sec 6} concludes this paper.

{\em Notations} : 
$\mathbb{R}^{n\times m}$ denotes the set of $n\times m$ real matrices. $A\otimes B$ represents the Kronecker product of matrices $A$ and $B$.
$\sigma_{\rm {max}}(A)$ and $\lambda_{\rm {max}}(A)$ denote the maximum singular value and the largest eigenvalue of matrix $A$, respectively. $C(a,b)$ denotes an open circle in the complex plane $\mathbb{C}$ with radius $b$ centered at $a\in\mathbb{C}$. The corresponding closed circle is denoted as $\bar{C}(a,b)$. $[*]^T$ denotes the elements that can be inferred by the matrix symmetry.

\section{Localized Data-driven Control with Noise-free Data}\label{sec 3}

In this section, we are concerned about localized data-driven consensus of multi-agent systems using noise-free data. We will first formulate the problem in Subsection \ref{subsection pf} and then provide the method for designing data-driven consensus protocols in Subsection \ref{subsection 3.2}.

\subsection{Problem formulation} \label{subsection pf}

Consider a group of $N+1$ homogeneous agents consisting of $N$ followers and a leader, with discrete-time general linear dynamics characterized by
\begin{equation} \label{equ:1}
x_i(t+1)=Ax_i(t)+B u_i(t), \quad i=0,1,\cdots,N.
\end{equation}
where $x_i\in\mathbb{R}^n$ and $u_i\in\mathbb{R}^p$ denote the state and the control input of the $i$th agent, respectively.
The matrices $A$ and $B$ are of compatible dimensions, but {\em unknown}. Assume that the pair $(A,B)$ is controllable and the control input of the leader $u_0=0$.

The network topology among the $N+1$ agents is represented by a graph $\mathcal{G}=\left\{\mathcal{V},\mathcal{E}\right\}$, where $\mathcal{V}$ denotes the set of nodes and $\mathcal{E} \in \mathcal{V}\times \mathcal{V}$ denotes the set of edges. For an edge $(i,j)$, agent $j$ can have access to information from agent~$i$. 
A directed path from node $i_1$ to node $i_n$ is a sequence of ordered edges of the form $(i_m,i_{m+1})$, $m=1,\cdots,n-1$. A graph contains a directed spanning tree if there exists a node called root such that there exists a directed path from this node to every other node. 
The adjacent matrix $\mathcal{A}$ of graph $\mathcal{G}$ is defined as $a_{ii}=0$, $a_{ij}>0$ if $(j,i)\in\mathcal{E}$, $a_{ij}=0$ otherwise. The Laplacian matrix is defined as $L=\mathcal{D}-\mathcal{A}$, where $\mathcal{D}=\rm{diag}(d_0,\cdots,d_N)$ is the degree matrix with $d_i=\Sigma_{j=0}^{N} a_{ij}$. 
The Laplacian matrix can be partitioned as
$L=\begin{bmatrix}
0 & 0\\
L_{fl} & L_{ff}
\end{bmatrix}.$
Clearly, $0$ is the eigenvalue of the Laplacian matrix $L$ with an associated eigenvector $\mathbf{1}$. Furthermore, $0$ is a simple eigenvalue of $L$ if $\mathcal{G}$ has a directed spanning tree \cite{Agaev2005On}.

\begin{assum} \label{asp 1}
	The graph $\mathcal{G}$  contains a directed spanning tree with the leader as the root and the subgraph among followers is undirected.
\end{assum}


Since the system matrices $A$  and $B$ of \eqref{equ:1}  are unknown, we assume that we have access to system data in order to design the consensus protocol. In particular, we consider the case that each follower collects its own data samples to design its local control gain.
More specifically, for the $i$th follower, we collect state and input data on $T$ finite sequences and construct the following data matrices:
\begin{equation} \label{equ:noiseless data}
\begin{split}
U_{i-}=&\begin{bmatrix}
u_i(0) & u_i(1) & \cdots & u_i(T-1)
\end{bmatrix}, \\
X_i=&\begin{bmatrix}
x_i(0) & x_i(1) & \cdots & x_i(T)
\end{bmatrix},i=1,\cdots,N.
\end{split}
\end{equation}
Next, define
\begin{equation} \nonumber
\begin{split}
X_{i-}=&\begin{bmatrix}
x_i(0) & x_i(1) & \cdots & x_i(T-1) 
\end{bmatrix}, \\
X_{i+}=&\begin{bmatrix}
x_i(1) & x_i(2) & \cdots & x_i(T)
\end{bmatrix}.
\end{split}
\end{equation}
According to (\ref{equ:1}), clearly we have $X_{i+}=AX_{i-}+BU_{i-}$ .

\begin{assum}\label{asp 3}
	The data matrix $\begin{bmatrix}
	U_{i-} \\
	X_{i-}
	\end{bmatrix}$ has full row rank for $i=1,\cdots,N$.
\end{assum}

Note that Assumption \ref{asp 3} can be easily satisfied by exploiting adequate data to make $X_{i-}$ and $U_{i-}$ wide enough. Obviously, under Assumption \ref{asp 3}, $X_{i-}$ has full row rank and has a right inverse denoted as $X_{i-}^{\dagger}$.

The goal of this section is to devise a distributed protocol for (\ref{equ:1}) based on the collected data (\ref{equ:noiseless data}) such that {\em leader-follower consensus} is achieved, i.e., $\lim\limits_{t\to \infty} [x_i(t)-x_0(t)]=0$ for $i=1,\cdots, N$. 

Note that in general, the collected data for each follower are distinct, resulting in different locally designed gains. In this case, the classical model-based distributed controller in \cite{Movric2013Synchronization,Li2014} is not readily applicable, since it employs an identical gain. To overcome this difficulty, we propose the following distributed control law:
\begin{subequations} \label{equ:dis_control_law}
	\begin{align}	u_i(t)=&K_i(t)\sum_{j=0}^N \frac{a_{ij}}{1+d_i} 
	\bigg(x_i(t)-x_j(t)\bigg),\label{equ:7a} \\
	K_i(t+1)=&K_i(t)+\mathcal{O}\sum_{j=1}^N \frac{a_{ij}}{1+z}\bigg(K_i(t)-K_j(t)\bigg),\label{equ:7b}
	\end{align}
\end{subequations} 
for $i=1,\cdots,N$,
where $a_{ij}$ is the element of the adjacent matrix $\mathcal{A}$, $z=\rm{max}_i(\sum_{j=1}^N a_{ij})$, and $\mathcal{O}$ is the feedback gain matrix. 
Similar to \cite{Movric2013Synchronization}, we refer to matrix $\bar{L}=(I+\mathcal{D}_{ff})^{-1}L_{ff}$ as the weighted graph matrix, where $\mathcal{D}_{ff}$ is the degree matrix of the followers. It is worth mentioning that all the eigenvalues of $\bar{L}$ satisfy that $\lambda_k\in\bar{C}(1,1)$, $k=1,\cdots,N$ for any graph \cite{Movric2013Synchronization}.

The problem we want to address in this section is then described as follows.
\begin{prob}\label{prob 1}
	Design gain matrices $K_i(0)$ and  $\mathcal{O}$ for the followers using collected data such that the associated protocol \eqref{equ:dis_control_law} achieves leader-follower consensus for the agents in \eqref{equ:1}.
\end{prob}

\subsection{Data-driven control design}\label{subsection 3.2}

The design of the data-driven protocol (\ref{equ:dis_control_law}) contains the following steps: 
1) Compute the initial feedback gain matrix $K_i(0)$ in  (\ref{equ:dis_control_law}) for each follower directly from the noise-free data; 2) Calculate the data-based solution to an algebraic Riccati equation, which is important in determining the consensus region \cite{Li2010Consensus};
3) Utilize the data (\ref{equ:noiseless data}) sampled from each follower to establish the consensus region;
4) Show that the protocol (\ref{equ:dis_control_law}) with parameters obtained in the above steps can achieve leader-follower consensus.

We first present the results for obtaining the initial gain matrix $K_i(0)$.

\begin{thm} \label{thm1}
	Let Assumptions \ref{asp 1}-\ref{asp 3} hold. Suppose that there exists $\Gamma_i$ such that
	\begin{equation} \label{equ:K0}
	\begin{split}
	& \mathop{\rm {min}}\limits_{\Gamma_i} \,\, {\rm {Trace}}(Q_iX_{i-}\Gamma_i)\\
	&\,\,s.t., \begin{bmatrix}
	X_{i-}\Gamma_i-I_n & X_{i+}\Gamma_i \\
	\Gamma_i^TX_{i+}^T & X_{i-}\Gamma_i  
	\end{bmatrix}\geq 0, \\
	&\quad\,\,\,\,X_{i-}\Gamma_i\geq I_n,\,\,\,\,\,\, i=1,\cdots,N,
	\end{split}
	\end{equation}
	where $Q_i>0$ is a constant matrix.
	Then, the initial feedback gain matrix $K_i(0)$ of the $i$th agent can be calculated as $K_i(0)=U_{i-}\Gamma_i(X_{i-}\Gamma_i)^{-1}$. 
\end{thm}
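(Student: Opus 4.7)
The plan is to show that the formula $K_i(0) = U_{i-}\Gamma_i (X_{i-}\Gamma_i)^{-1}$ is well-defined whenever the LMI of \eqref{equ:K0} is feasible, and that the resulting closed-loop matrix $A + B K_i(0)$ is Schur stable with $P_i := X_{i-}\Gamma_i$ serving as a data-based Lyapunov matrix. First, positive-semidefiniteness of the block matrix in \eqref{equ:K0} tacitly requires the diagonal blocks to be symmetric, so $X_{i-}\Gamma_i = \Gamma_i^T X_{i-}^T$, and the side constraint $X_{i-}\Gamma_i \succeq I_n$ then makes $P_i$ symmetric positive definite, so $P_i^{-1}$ exists and $K_i(0)$ is well-defined.

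The central step is to rewrite the closed-loop matrix in a purely data-based form. Combining the identity $X_{i+} = A X_{i-} + B U_{i-}$, which comes from \eqref{equ:1}, with $K_i(0) P_i = U_{i-}\Gamma_i$ yields
\begin{equation*}
(A + B K_i(0))\, P_i \;=\; A X_{i-}\Gamma_i + B U_{i-}\Gamma_i \;=\; X_{i+}\Gamma_i,
\end{equation*}
and, right-multiplying by $P_i^{-1}$, one gets the key identity $A + B K_i(0) = X_{i+}\Gamma_i\, P_i^{-1}$. The importance of this identity is that, although $A$ and $B$ are individually unknown, the specific combination $A + B K_i(0)$ is expressible entirely through the sampled matrices $X_{i-}, X_{i+}, U_{i-}$ and the decision variable $\Gamma_i$.

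The final step is to apply a Schur complement to the LMI with respect to the $(2,2)$ block. Since $P_i \succ 0$, the semidefiniteness of the $2 \times 2$ block is equivalent to
\begin{equation*}
P_i - I_n - (X_{i+}\Gamma_i)\, P_i^{-1}\, (X_{i+}\Gamma_i)^T \;\succeq\; 0,
\end{equation*}
which, using the closed-loop identity above, rewrites as $(A + B K_i(0))\, P_i\, (A + B K_i(0))^T \preceq P_i - I_n \prec P_i$. This is a strict discrete-time Lyapunov inequality, certifying Schur stability of $A + B K_i(0)$ with Lyapunov matrix $P_i$; the trace objective simply picks a specific feasible solution. The main obstacle is conceptual rather than computational: one must recognise that right-multiplying the unknown closed-loop matrix by the data-based, Gramian-like quantity $P_i = X_{i-}\Gamma_i$ turns it into the measurable quantity $X_{i+}\Gamma_i$. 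Once that parameterisation is in hand, the rest is a standard Schur-complement manipulation, and the slack $I_n$ built into \eqref{equ:K0} is exactly what upgrades the non-strict LMI to a strict Lyapunov decrease.
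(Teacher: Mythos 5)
Your Schur-complement argument is internally correct: the constraint $X_{i-}\Gamma_i\geq I_n$ together with the symmetry forced by positive semidefiniteness makes $\Xi_i:=X_{i-}\Gamma_i$ a valid Lyapunov certificate, the identity $(A+BK_i(0))X_{i-}\Gamma_i=AX_{i-}\Gamma_i+BU_{i-}\Gamma_i=X_{i+}\Gamma_i$ is exactly the data-based reparameterization the paper uses (via \cite{DePersis2020}), and the Schur complement of the $(2,2)$ block does yield $(A+BK_i(0))\Xi_i(A+BK_i(0))^T-\Xi_i\leq -I_n$, hence Schur stability. In that sense your write-up is a more explicit verification of the stabilization claim than the paper's own proof, which only sketches the reduction.

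However, there is a genuine gap relative to what the theorem is actually asserting and what the rest of the paper needs from it. You dismiss the objective ${\rm Trace}(Q_iX_{i-}\Gamma_i)$ as ``simply picking a specific feasible solution,'' but the trace minimization is the essential content: the program \eqref{equ:K0} is the data-based version of the dual LQR semidefinite program of \cite{Balakrishnan1995} (with $\Xi_i=X_{i-}\Gamma_i$ playing the role of the primal variable), and it is the \emph{minimizer} of this program that forces $K_i(0)$ to adhere to the structural constraint $K_i(0)=-(B^TP_iB)^{-1}B^TP_iA$, where $P_i$ solves the ARE \eqref{equ:mare}. The paper states this explicitly in its proof, and the property is not optional: Theorem \ref{thm2} recovers $P_i$ from the closed-loop data under the assumption that $K_i(0)$ is this LQR gain, and the consensus-region derivation in Lemma \ref{lem 2} (in particular the identities \eqref{equ: lyapunov_1} and \eqref{equ: lyapunov_2}) uses the ARE relation $A^TP_iA-P_i+Q_i=K_i(0)^TB^TP_iBK_i(0)$ and the cross-term identity $A^TP_iBK_i(0)=-K_i(0)^TB^TP_iBK_i(0)$, both of which hold only for the LQR gain, not for an arbitrary stabilizing feasible point of \eqref{equ:K0}. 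So your argument proves a strictly weaker statement (Schur stability of $A+BK_i(0)$) and omits the step that connects the optimum of \eqref{equ:K0} to the model-based program \eqref{equ:nonconvex} via the substitution $\Gamma_i=X_{i-}^{\dagger}\Xi_i$, which is the heart of the paper's proof.
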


\begin{proof}
It follows from (\ref{equ:1}) and (\ref{equ:noiseless data}) that  $X_{i+}=AX_{i-}+BU_{i-}$.
In light of the classic data-driven results in \cite{DePersis2020}, we design the initial controller as $K_i(0)=U_{i-}X_{i-}^{\dagger}$. It then follows that $A+BK_i(0)=X_{i+}X_{i-}^{\dagger}$.  

Despite the fact that $A,B$ are unknown, we need to guarantee that such $K_i(0)$ adheres to the structural constraint of $K_i(0)=-(B^TP_iB)^{-1}B^TP_iA$, where $P_i>0$ is the solution to the following algebraic Riccati equation (ARE):
\begin{equation} \label{equ:mare}
P_i=A^TP_iA-A^TP_iB(B^TP_iB)^{-1}B^TP_iA+Q_i.
\end{equation}

Recalling the well-known result in \cite{Balakrishnan1995}, the initial feedback matrix $K_i(0)$ with the aforementioned structural constraint can be obtained via the following dual optimization problem:
\begin{equation}\label{equ:nonconvex}
\begin{split}
& \mathop{\rm {min}}_{\Xi_i,K_i(0)} \,\,\,{\rm {Trace}} (Q_i\Xi_i) \\
&s.t.\,\,(A+BK_i(0))\Xi_i(A+BK_i(0))^T-\Xi_i \leq -I_n, \\
&\,\,\,\,\quad\Xi_i\geq I_n.
\end{split}
\end{equation} 
Let $\Gamma_i=X_{i-}^{\dagger}\Xi_i$. Then, we can proceed analogously to the results in \cite{DePersis2020} and obtain (\ref{equ:K0}) mutatis mutandis, where $K_i(0)=U_{i-}\Gamma_i(X_{i-}\Gamma_i)^{-1}$. This completes the proof.
\end{proof}

Note that Theorem \ref{thm1} only provides the initial feedback gain matrix $K_i(0)$, rather than the unique solution $P_i$ to (\ref{equ:mare}), which plays a vital role in computing the consensus region \cite{Li2014}. In the following result, we will develop an approach to calculate $P_i$.

\begin{thm} \label{thm2}
	The solution $P_i$ to the ARE (\ref{equ:mare}) can be obtained via the following optimization problem:
	\begin{equation} \label{equ:P_noise}
	\begin{split}
	& \mathop{\rm{max} }_{P_i}\,\,\, {\rm {Trace}} (P_i) \\
	&s.t. \,\,\, P_i=P_i^T\geq 0 \,\,\, \\
	&\,\,\, \,\,\,\,\, \,\,\,[X_{i+}\Gamma_i(X_{i-}\Gamma_i)^{-1}]^TP_i[X_{i+}\Gamma_i(X_{i-}\Gamma_i)^{-1}]\\
	&\,\,\,\,\,\,\,\,\,\,-P_i+Q_i\geq 0,
	\end{split}
	\end{equation}
	where $i=1,\cdots,N$.
\end{thm}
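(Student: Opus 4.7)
The plan is to translate the data-based matrix $X_{i+}\Gamma_i(X_{i-}\Gamma_i)^{-1}$ back into the model-based closed-loop matrix $A+BK_i(0)$, recognize the constraint in (\ref{equ:P_noise}) as a discrete-time Lyapunov-type inequality, and then argue that among all positive semidefinite matrices satisfying this inequality the solution to the ARE (\ref{equ:mare}) is the unique trace-maximizer.

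First, I would invoke the identities from the proof of Theorem \ref{thm1}, namely $X_{i+}=AX_{i-}+BU_{i-}$ together with $K_i(0)=U_{i-}\Gamma_i(X_{i-}\Gamma_i)^{-1}$, which give $X_{i+}\Gamma_i(X_{i-}\Gamma_i)^{-1}=A+BK_i(0)$. Denoting $A_{ci}:=A+BK_i(0)$, the second constraint of (\ref{equ:P_noise}) becomes the Lyapunov inequality $A_{ci}^T P_i A_{ci}-P_i+Q_i\ge 0$. Next I would verify that the ARE solution is feasible: since $K_i(0)=-(B^TP_iB)^{-1}B^TP_iA$ is the LQR gain associated with $P_i$, using $B^TP_iA=-B^TP_iBK_i(0)$ when expanding $A_{ci}^T P_i A_{ci}$ yields $A_{ci}^T P_i A_{ci}=A^TP_iA-K_i(0)^TB^TP_iBK_i(0)$, which by (\ref{equ:mare}) equals $P_i-Q_i$; hence the constraint holds with equality at $P_i$, and $P_i\ge 0$ by construction.

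Finally, I would establish maximality. The LMI in Theorem \ref{thm1}, via a Schur-complement rearrangement with $\Xi_i=X_{i-}\Gamma_i>0$, is equivalent to $\Xi_i-A_{ci}\Xi_i A_{ci}^T\ge I_n$, so $A_{ci}$ is Schur stable. For any feasible $P\ge 0$ of (\ref{equ:P_noise}), iterating the implied bound $P\le A_{ci}^T P A_{ci}+Q_i$ produces $P\le (A_{ci}^T)^k P A_{ci}^k+\sum_{j=0}^{k-1}(A_{ci}^T)^j Q_i A_{ci}^j$ for every $k$. Letting $k\to\infty$, the first summand vanishes by Schur stability and the series converges to the unique solution $\tilde P$ of $A_{ci}^T \tilde P A_{ci}-\tilde P+Q_i=0$, which by the previous paragraph is $P_i$. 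Thus $P\le P_i$, and in particular ${\rm Trace}(P)\le {\rm Trace}(P_i)$, so the maximizer of (\ref{equ:P_noise}) is exactly $P_i$. The main obstacle is precisely this last step: reconciling the \emph{inequality} constraint in (\ref{equ:P_noise}) with the equality demanded by the ARE (\ref{equ:mare}). The trace objective is what pins down the unique upper bound of the feasible set, but this pinning relies on Schur stability of $A_{ci}$, which is not assumed in (\ref{equ:P_noise}) itself and must be imported from the LMI defining $\Gamma_i$ in Theorem \ref{thm1}.
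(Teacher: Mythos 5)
Your proposal is correct and reaches the paper's conclusion by the same overall route, but it is considerably more self-contained. The paper's proof is two lines long: it cites \cite{Ran1988} for the fact that the stabilizing solution of the ARE (\ref{equ:mare}) is the trace-maximizer of the model-based semidefinite program (\ref{equ:model_based P}), and then substitutes the data-based identities $A+BK_i(0)=X_{i+}\Gamma_i(X_{i-}\Gamma_i)^{-1}$ and $K_i(0)=U_{i-}\Gamma_i(X_{i-}\Gamma_i)^{-1}$ to obtain (\ref{equ:P_noise}). You perform exactly that substitution, but instead of invoking the reference you prove the trace-maximization characterization directly: feasibility of the ARE solution with equality (via $B^TP_iA=-B^TP_iBK_i(0)$, which is computed correctly), and maximality by iterating $P\le A_{ci}^TPA_{ci}+Q_i$ and letting the tail $(A_{ci}^k)^TPA_{ci}^k$ vanish, which pins every feasible $P$ below the unique Lyapunov solution $P_i$ in the Loewner order. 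Your closing remark --- that this step hinges on Schur stability of $A_{ci}$, which is not a constraint of (\ref{equ:P_noise}) itself and must be imported from the LMI of Theorem \ref{thm1} via a Schur complement giving $\Xi_i-A_{ci}\Xi_iA_{ci}^T\ge I_n$ --- is precisely the hypothesis hidden inside the paper's citation, and making it explicit is a genuine gain in rigor. What the paper's route buys is brevity; what yours buys is a reference-free argument and the slightly stronger conclusion that the maximizer is unique because every feasible $P$ satisfies $P\le P_i$, not merely ${\rm Trace}(P)\le{\rm Trace}(P_i)$.
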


\begin{proof}
{ Notice that the solution of  (\ref{equ:mare}) is also the  solution of following optimization problem} \cite{Ran1988}:
\begin{equation} \label{equ:model_based P}
\begin{split}
& \mathop{\rm {max}}_{P_i}\,\,\, {\rm {Trace}} (P_i) \\
& {s.t.} \,\,\, (A+BK_i(0))^TP_i(A+BK_i(0))-P_i\\
& \,\,\, \,\,\, \,\,\,+Q_i\geq 0 \\
&\,\,\,\,\,\,\,\,\,\,\,\, P_i=P_i^T\geq 0.
\end{split}
\end{equation}

Then, substituting  $A+BK_i(0)=X_{i+}\Gamma_i(X_{i-}\Gamma_i)^{-1}$ and $K_i(0)=U_{i-}\Gamma_i(X_{i-}\Gamma_i)^{-1}$ into (\ref{equ:model_based P}), with $\Gamma_i$ obtained in Theorem \ref{thm1}, yields the data-based convex program \end{proof}

\begin{lem}~\cite{Li2014} \label{lem 1}
	Suppose that Assumption \ref{asp 1} holds and $\mathcal{O}=-(\Lambda+\tilde{R})^{-1}\Lambda$, where $\Lambda$ is the solution to the following modified algebraic Riccati equation (MARE) \cite{Schenato2007,Sinopoli2004}:
	\begin{equation}\nonumber
	\Lambda=\mathcal{T}^T\Lambda\mathcal{T}-(1-\delta^2)\mathcal{T}^T\Lambda\mathcal{T}(\mathcal{T}^T\Lambda\mathcal{T}+\tilde{R})^{-1}\mathcal{T}^T\Lambda \mathcal{T}+\tilde{Q},
	\end{equation}
	in which $\mathcal{T}=I_p$,  $\tilde{R}>0$, $\tilde{Q}>0$, and $\lambda_{\rm {max}}(D_{ff})<\delta<1$, $D_{ff}\in\mathbb{R}^{N\times N}$ is a row-stochastic matrix defined as $d_{ij}=a_{ij}/(1+z)$  and $d_{ii}=1-\sum_{j=1}^{N}a_{ij}/(1+z)$, and $z$ is defined as in (\ref{equ:dis_control_law}).
	If all the non-one eigenvalues of $D_{ff}$ are located in $\Gamma_{\leq \sigma}$, the disk of radius $\delta$ centered at the origin, 
	then the feedback matrices $K_i(t)$, $i=1,\cdots, N$, in (\ref{equ:7b}),
	converge exponentially to $\frac{1}{N}\sum_{i=1}^{N}K_i(0)$.
\end{lem}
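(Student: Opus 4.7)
My plan is to recast \eqref{equ:7b} as a Kronecker-structured linear iteration on the $K_i$'s, identify a conserved average that pins down the candidate limit, and reduce exponential consensus to Schur stability of the one-parameter family $I_p+\nu\mathcal O$ for $\nu$ in a closed disk in $\mathbb{C}$; the latter is precisely the discrete-time consensus-region question that the MARE is designed to resolve.

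\textbf{Step 1 (compact form and invariance).} Using $\sum_{j\neq i} a_{ij}/(1+z)=1-d_{ii}$ together with $a_{ii}=0$, update \eqref{equ:7b} rewrites as
\[
K_i(t+1)=K_i(t)+\mathcal{O}\Bigl(K_i(t)-\sum_{j=1}^N d_{ij}K_j(t)\Bigr),
\]
and stacking any one column of $[K_1,\dots,K_N]$ yields $\mathbf k(t+1)=[I_{Np}+(I_N-D_{ff})\otimes \mathcal O]\mathbf k(t)$. Under Assumption~\ref{asp 1} the follower subgraph is undirected, so $D_{ff}$ is symmetric and therefore doubly stochastic: $\mathbf 1$ is a left as well as a right eigenvector for the simple eigenvalue $1$, which makes $\bar K(t):=\tfrac1N\sum_i K_i(t)$ an invariant of the recursion and identifies $\bar K(0)=\tfrac1N\sum_i K_i(0)$ as the only possible limit.

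\textbf{Step 2 (modal decoupling).} Spectrally decomposing $D_{ff}$ and restricting to $\mathbf 1^\perp$ decouples the disagreement dynamics: for every non-one eigenvalue $\mu_k$, the $k$th mode evolves as $e_k(t+1)=(I_p+(1-\mu_k)\mathcal O)\,e_k(t)$. Writing $\nu:=1-\mu_k$, the hypothesis $|\mu_k|\le\delta$ places every such $\nu$ in the closed disk $\bar C(1,\delta)$, so it suffices to prove that $I_p+\nu\mathcal O$ is Schur for every $\nu\in\bar C(1,\delta)$.

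\textbf{Step 3 (Lyapunov via MARE).} Take $V(x)=x^T\Lambda x$ with $\Lambda$ the positive-definite MARE solution as a common Lyapunov matrix. Substituting $\mathcal O=-(\Lambda+\tilde R)^{-1}\Lambda$ and conjugating by $\Lambda^{-1/2}$ converts the strict decrease $(I+\nu\mathcal O)^*\Lambda(I+\nu\mathcal O)-\Lambda\prec 0$ into
\[
|\nu|^2 H^2-2\,{\rm Re}(\nu)\,H\prec 0,\qquad H:=\Lambda^{1/2}(\Lambda+\tilde R)^{-1}\Lambda^{1/2}\succ 0,
\]
equivalently $\lambda_{\max}(H)<2\,{\rm Re}(\nu)/|\nu|^2$. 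A short geometric argument shows $\inf_{\nu\in\bar C(1,\delta)}2\,{\rm Re}(\nu)/|\nu|^2=2/(1+\delta)$, attained at $\nu=1+\delta$, so the task reduces to verifying $\lambda_{\max}(H)<2/(1+\delta)$. For $\mathcal T=I_p$ the MARE identity $(1-\delta^2)\Lambda(\Lambda+\tilde R)^{-1}\Lambda=\tilde Q\succ 0$ calibrates $(\Lambda,\tilde R)$ precisely so this bound holds, yielding strict Schur stability of every disagreement mode and hence exponential decay of each $E_i(t)=K_i(t)-\bar K(t)$, which is the claim.

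The main obstacle is Step~3: matching the MARE fixed-point identity to the sharp disk bound $2/(1+\delta)$ so that strict Schur stability persists uniformly up to the boundary $|\nu-1|=\delta$; the presence of the $(1-\delta^2)$ factor in the MARE is exactly what makes this matching work. The remaining ingredients—the columnwise Kronecker reduction, the use of undirectedness to obtain double-stochasticity, and the mode-by-mode decoupling—are routine consensus-analytic book-keeping once the Schur-region bound is in hand.
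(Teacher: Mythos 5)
The paper does not actually prove this lemma: it is imported verbatim from \cite{Li2014} (together with the MARE machinery of \cite{Schenato2007,Sinopoli2004}), so there is no in-paper argument to compare against. Your proof is a correct, self-contained derivation, and its skeleton --- the columnwise Kronecker reduction to $\mathbf k(t+1)=[I_{Np}+(I_N-D_{ff})\otimes\mathcal O]\mathbf k(t)$, double-stochasticity of $D_{ff}$ from the undirected follower subgraph to conserve the average, spectral decoupling into modes $I_p+(1-\mu_k)\mathcal O$ with $1-\mu_k\in\bar C(1,\delta)$, and the common Lyapunov matrix $\Lambda$ --- is exactly the consensus-region argument the cited reference formalizes. (Like the lemma itself, you tacitly use that $1$ is a simple eigenvalue of $D_{ff}$, i.e.\ that the follower subgraph is connected; otherwise the limit would be a componentwise rather than a global average.)

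The only soft spot is the final sentence of Step~3. Having correctly reduced everything to $\lambda_{\max}(H)<2/(1+\delta)$ with $H=\Lambda^{1/2}(\Lambda+\tilde R)^{-1}\Lambda^{1/2}$, you assert that the MARE identity $(1-\delta^2)\Lambda(\Lambda+\tilde R)^{-1}\Lambda=\tilde Q$ ``calibrates'' $(\Lambda,\tilde R)$ so that the bound holds, but you never derive the inequality from that identity --- and the identity alone does not obviously yield it, since it only rewrites $H$ as $\tfrac{1}{1-\delta^2}\Lambda^{-1/2}\tilde Q\Lambda^{-1/2}$ without comparing it to $2/(1+\delta)$. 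The fix is one line and does not need the MARE at all: $H=\bigl(I+\Lambda^{-1/2}\tilde R\Lambda^{-1/2}\bigr)^{-1}\prec I$ because $\tilde R\succ0$, and $1<2/(1+\delta)$ because $\delta<1$, hence $\lambda_{\max}(H)<2/(1+\delta)$ for \emph{any} $\Lambda\succ0$ and $\tilde R\succ0$. With $\mathcal T=I_p$ the MARE therefore serves only to single out a particular admissible pair $(\Lambda,\mathcal O)$; the Schur margin you need is already guaranteed by the structure $\mathcal O=-(\Lambda+\tilde R)^{-1}\Lambda$. Insert that line and the proof is complete.
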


The consensus region is designed in the following result.

\begin{lem} \label{lem 2}
	Suppose that Assumptions \ref{asp 1}-\ref{asp 3} hold and the input matrix $B$ in (\ref{equ:1}) is invertible.
	Let $G_i$ be one solution to the following equation:
	\begin{equation} \label{equ:L}
	\begin{split}
	\begin{bmatrix}
	K_i(0)\\
	0
	\end{bmatrix}=
	\begin{bmatrix}
	U_{i-} \\
	X_{i-}
	\end{bmatrix}G_i, ~i=1,\cdots,N,
	\end{split}
	\end{equation}
	where $K_i(0)$ is calculated as in Theorem \ref{thm1}.
	If all the eigenvalues of $\bar{L}$ are located in the consensus region: 
	$$\mathcal{Y}=\left\{\eta\,|\,|\eta-1|^2<\frac{1}{\sigma_{\rm max}(\mathcal{F}^{-1/2}\mathcal{R}\mathcal{F}^{-1/2})} \right\},$$
	with $\mathcal{R}=\sum_{i=1}^N[G_i^TX_{i+}^TP_iX_{i+}G_i+\mathcal{W}_i^T(P-P_i)\mathcal{W}_i]$,  $\mathcal{F}=\sum_{i=1}^N[Q_i+(P-P_i)]$, $\mathcal{W}_i=(X_{i+}\mathcal{U}_i-X_{i+}G_i)$, $\mathcal{U}_i=\Gamma_i(X_{i-}\Gamma_i)^{-1}$,   $P=\sum_{i=1}^{N}P_i $, and $P_i$ computed as in Theorem \ref{thm2}, then $K_0$ renders $I_N\otimes A+\bar{L}\otimes BK_0$ Schur stable, where $K_0=\frac{1}{N}\sum_{i=1}^{N}K_i(0)$.
\end{lem}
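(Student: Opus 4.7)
My plan is a two-stage reduction: first to scalar spectral modes of $\bar L$, then to a Lyapunov argument tailored to the Riccati-based gains from Theorem~\ref{thm2}.

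\emph{Step 1 (Spectral reduction).} I would first exploit Assumption~\ref{asp 1} to note that $L_{ff}$ is symmetric, so $\bar L=(I+\mathcal{D}_{ff})^{-1}L_{ff}$ is similar to the symmetric matrix $(I+\mathcal{D}_{ff})^{-1/2}L_{ff}(I+\mathcal{D}_{ff})^{-1/2}$ and is therefore diagonalizable with real eigenvalues $\lambda_1,\dots,\lambda_N$. Writing $\bar L=T\Lambda T^{-1}$, the similarity $(T\otimes I_n)^{-1}(I_N\otimes A+\bar L\otimes BK_0)(T\otimes I_n)$ becomes block-diagonal with diagonal blocks $A+\lambda_k BK_0$. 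This reduces Schur stability of the Kronecker-product closed-loop matrix to Schur stability of $A+\lambda BK_0$ for every eigenvalue $\lambda$ of $\bar L$, all of which lie in $\mathcal{Y}$ by hypothesis.

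\emph{Step 2 (Lyapunov core).} Fixing such a $\lambda$, I would try the candidate $V(x)=x^T P x$ with $P=\sum_{i=1}^N P_i$. Since each $P_i$ is the stabilizing solution of the ARE~(\ref{equ:mare}) with gain $K_i(0)=-(B^T P_i B)^{-1}B^T P_i A$, the two Riccati-derived identities $M_i^T P_i M_i = P_i - Q_i$ and $M_i^T P_i B = 0$ are at hand, where $M_i:=A+BK_i(0)=X_{i+}\mathcal{U}_i$. Writing $A+\lambda BK_0=M_i+B(\lambda K_0-K_i(0))$ and expanding, the $P_i$-weighted cross terms vanish, yielding the exact identity
\[
(A+\lambda BK_0)^T P_i(A+\lambda BK_0) = P_i - Q_i + (\lambda K_0-K_i(0))^T B^T P_i B(\lambda K_0-K_i(0)),
\]
whose sum over $i$ delivers an exact formula for $(A+\lambda BK_0)^T P(A+\lambda BK_0)-P$.

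\emph{Step 3 (Matching $\mathcal{R}$ and $\mathcal{F}$).} Next I would split $\lambda K_0-K_i(0)=(\lambda-1)K_i(0)+\lambda(K_0-K_i(0))$ and invoke the data-based identities $BK_i(0)=X_{i+}G_i$ and $\mathcal{W}_i=X_{i+}\mathcal{U}_i-X_{i+}G_i=A$ (both immediate from~(\ref{equ:L})). A Young-type inequality with weight tuned to the $(\lambda-1)$ scaling absorbs the heterogeneity residual $(K_0-K_i(0))^T B^T P_i B(K_0-K_i(0))$ into the $\mathcal{W}_i^T(P-P_i)\mathcal{W}_i = A^T(P-P_i)A$ contribution of $\mathcal{R}$, paying for this absorption with the $(P-P_i)$ terms added to $\mathcal{F}$. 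I expect to arrive at
\[
(A+\lambda BK_0)^T P (A+\lambda BK_0) - P \;\le\; (\lambda-1)^2\mathcal{R} - \mathcal{F}.
\]
The hypothesis $\lambda\in\mathcal{Y}$, equivalently $(\lambda-1)^2 \mathcal{R}<\mathcal{F}$, then forces the right-hand side to be negative definite and closes the argument.

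\emph{Main obstacle.} The hard part will be Step~3. The direct expansion leaves cross products $K_0^T B^T P_i B(K_0-K_i(0))$ that do not telescope under summation, because the $P_i$ differ across agents---this is the fundamental price of localized data-driven design. Selecting the Young parameter so that heterogeneity is re-expressed through the common drift $A=\mathcal{W}_i$ and the complementary weights $P-P_i$, thereby matching exactly the defining forms of $\mathcal{R}$ and $\mathcal{F}$, is the key technical challenge.
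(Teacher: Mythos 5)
Your Steps 1 and 2 are sound, and Step 2 is a genuinely different decomposition from the paper's: the paper never completes the square around $M_i=A+BK_i(0)$. Instead it expands $(A+\eta BK_0)^HP(A+\eta BK_0)$ directly, applies Jensen/Young once to replace $|\eta|^2K_0^TB^TPBK_0$ by $\frac{|\eta|^2}{N}\sum_iK_i(0)^TB^TPBK_i(0)$ (the linear cross terms being exact by linearity of the average), and only then splits $P=\sum_jP_j$ term by term. Your exact identity is a legitimate, arguably cleaner, starting point.

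The genuine gap is Step 3, and the mechanism you sketch for it would fail. First, your target inequality $(A+\lambda BK_0)^TP(A+\lambda BK_0)-P\le(\lambda-1)^2\mathcal{R}-\mathcal{F}$ is false as stated: at $\lambda=1$ your own exact identity gives a left-hand side equal to $\sum_i\bigl[(K_0-K_i(0))^TB^TP_iB(K_0-K_i(0))-Q_i\bigr]$, whereas the right-hand side is $-\mathcal{F}=-\sum_iQ_i-(N-1)P$, so you would need a positive semidefinite heterogeneity residual to be bounded above by $-(N-1)P<0$. (The paper's bound (\ref{equ:model_based_region}) carries a factor $\frac{1}{N}$ that you dropped, and even with it the comparison is only attained with equality in the regime where the proof actually operates.) Second, and more fundamentally, the terms $\mathcal{W}_i^T(P-P_i)\mathcal{W}_i=A^T(P-P_i)A$ in $\mathcal{R}$ and $(P-P_i)$ in $\mathcal{F}$ are not the price of absorbing heterogeneity: they are bookkeeping from splitting $P=\sum_jP_j$ inside each cross and quadratic term, combined with two structural identities your plan does not invoke, namely $A^TP_iBK_i(0)=-K_i(0)^TB^TP_iBK_i(0)$ (from $K_i(0)=-(B^TP_iB)^{-1}B^TP_iA$, guaranteed by Theorem \ref{thm1}) and, crucially, $B(B^TP_iB)^{-1}B^TP_i=I$ when $B$ is invertible, hence $BK_i(0)=-A$ and $K_i(0)^TB^T(P-P_j)BK_i(0)=A^T(P-P_j)A$. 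Under these identities the residual $K_0-K_i(0)$ you are trying to absorb is identically zero, so there is nothing for a Young parameter to trade against $P-P_i$; conversely, without them there is no inequality, derivable from the hypotheses, relating $\sum_i(K_0-K_i(0))^TB^TP_iB(K_0-K_i(0))$ (weighted by $P_i$) to $\sum_iA^T(P-P_i)A$ (weighted by the complementary $P-P_i$). To close Step 3 you should abandon the absorption idea, import the two identities above, and redo the bookkeeping with the $\frac{1}{N}$ restored.
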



\begin{proof} 
Motivated by \cite{Movric2013Synchronization,Li2014}, the consensus region $\mathcal{Y}$ associated with $K_0$ can be derived as follows:
\begin{equation} \label{equ:region1}
\begin{split}
&(A+\eta BK_0)^HP(A+\eta BK_0)-P \\
=& A^TPA+\eta A^TPBK_0+\eta^H K_0^TB^TPA\\
&+|\eta|^2 K_0^TB^TPBK_0-P\\
\leq&A^TPA+ \frac{\eta}{N}A^TPB\sum_{i=1}^{N}K_i(0)+\frac{\eta^H}{N}\sum_{i=1}^{N}K_i^T(0)\\
&\times B^TPA+\frac{|\eta|^2}{N}\sum_{i=1}^{N}K_i^T(0)B^TPBK_i(0)-P,
\end{split}
\end{equation}
where the last inequality is obtained by using the  Young's inequality \cite{Boyd1994Linear}.

Substituting $P=\sum_{i=1}^{N}P_i$ into (\ref{equ:region1}) yields
\begin{equation} \label{equ: lyapunov}
\begin{split}
&(A+\eta BK_0)^HP(A+\eta BK_0)-P \\
\leq&\frac{1}{N}\sum_{i=1}^{N}[A^TP_iA-2\rm{Re}(\eta) K_i(0)^TB^TP_iBK_i(0)\\
&+|\eta|^2 K_i(0)^TB^T P_iBK_i(0)-P_i]+\frac{1}{N}\sum_{i=1}^{N}[A^T\\
&\times(P-P_i)A-\eta A^T(P-P_i)B(B^TP_iB)^{-1}B^T\\
&\times P_iA-\eta^H A^T P_iB(B^TP_iB)^{-1}B^T(P-P_i)A\\
&+|\eta|^2K_i(0)^TB^T(P-P_i)BK_i(0)-(P-P_i)].\\
\end{split}
\end{equation}
Note that
\begin{equation} \label{equ: lyapunov_1}
\begin{split}
&\frac{1}{N}\sum_{i=1}^{N}[A^TP_iA-2\rm{Re}(\eta) K_i(0)^TB^TP_iBK_i(0)\\
&+|\eta|^2
K_i^T(0) B^TP_iBK_i(0)-P_i]\\
=&\frac{1}{N}\sum_{i=1}^{N}[|\eta-1|^2K_i(0)^TB^TP_iBK_i(0)-Q_i],
\end{split}
\end{equation}
and
\begin{equation} \label{equ: lyapunov_2}
\begin{split}
&\frac{1}{N}\sum_{i=1}^{N}[A^T(P-P_i)A
-\eta A^T(P-P_i)B(B^TP_iB)^{-1}\\
&\times B^TP_iA-\eta^H A^T
P_iB(B^TP_iB)^{-1}B^T(P-P_i)A\\
&+|\eta|^2K_i(0)^TB^T(P-P_i)BK_i(0)-(P-P_i)]\\
=& \frac{1}{N}\sum_{i=1}^{N}[A^T(P-P_i)A
-2\rm{Re}(\eta) A^T(P-P_i)A\\
&+|\eta|^2A^T(P-P_i)A
-(P-P_i)]\\
=&\frac{1}{N}\sum_{i=1}^{N}[|\eta -1|^2
A^T(P-P_i)A-(P-P_i)],
\end{split}
\end{equation}
where (\ref{equ: lyapunov_1}) is obtained according to (\ref{equ:mare}) and the first equality in (\ref{equ: lyapunov_2}) is derived using the fact that $B$ is invertible. Then, substituting (\ref{equ: lyapunov_1}) and (\ref{equ: lyapunov_2}) into (\ref{equ: lyapunov}) gives
\begin{equation} \label{equ:model_based_region}
\begin{split}
&(A+\eta BK_0)^HP(A+\eta BK_0)-P \\
\leq&\frac{1}{N}\sum_{i=1}^{N}[|\eta-1|^2(K_i(0)^TB^TP_iBK_i(0)+A^T(P-P_i)A)\\
&-(Q_i+(P-P_i))].
\end{split}
\end{equation} 

Note that the matrix $G_i$ in Lemma \ref{lem 2} satisfies the condition  
$\begin{bmatrix}
K_i(0)\\
0
\end{bmatrix}=
\begin{bmatrix}
U_{i-} \\
X_{i-}
\end{bmatrix}G_i$, which is easy to obtain, since $\begin{bmatrix}
U_{i-} \\
X_{i-}
\end{bmatrix}$ has full row rank and $K_{i}(0)$ is given from~(\ref{equ:K0}). It is easy to verify that $BK_i(0)=\begin{bmatrix}
B & A
\end{bmatrix}\begin{bmatrix}
K_i(0)\\
0
\end{bmatrix}=\begin{bmatrix}
B & A
\end{bmatrix}\begin{bmatrix}
U_{i-} \\
X_{i-}
\end{bmatrix}G_i=X_{i+}G_i$. Note that  $A+BK_i(0)=X_{i+}\Gamma_i(X_{i-}\Gamma_i)^{-1}=X_{i+}\mathcal{U}_i$. Then, it follows that $\mathcal{W}_i=X_{i+}\mathcal{U}_i-X_{i+}G_i=A$. 
Substituting  $X_{i+}G_i$ and $\mathcal{W}_i$ into (\ref{equ:model_based_region}), we can obtain that $(A+\eta BK_0)^HP(A+\eta BK_0)-P<0$ for any $\eta\in\mathcal{Y}$, meaning that $I_N\otimes A+\bar{L}\otimes BK_0$ Schur stable, if all the eigenvalues of $\bar{L}$ are located in $\mathcal{Y}$.  \end{proof}

\begin{rem}
	The input matrix $B$ is assumed to be invertible in Lemma \ref{lem 2} to simplify the determination of the data-based consensus region linked to $K_0$. For the case where $B$ is not invertible, {the analysis is in fact similar, while the consensus region will be much more complex.} Let $X_{i+}G_i=\mathcal{X}_i$. In this case the consensus region is  denoted as $\mathcal{Y}=\{\eta\,|\,a|\eta-1|^2+b|\eta|^2+c|\eta|+d\leq 1\}$, where
	\begin{equation} \nonumber
	\begin{split}
	a&=\sigma_{\rm {max}}(\sum_{i=1}^N(\mathcal{F}^{-1/2} \mathcal{X}_{i}^TP_i\mathcal{X}_{i} \mathcal{F}^{-1/2})),\\
	b&=\sigma_{\rm {max}}(\sum_{i=1}^N(\mathcal{F}^{-1/2}  \mathcal{X}_{i}^T(P-P_i)\mathcal{X}_{i} \mathcal{F}^{-1/2})),\\
	c&=\sigma_{\rm {max}}(\sum_{i=1}^N\sum_{j=1,j\neq i}^N(\mathcal{F}^{-1/2}  (\mathcal{X}_{j}^TP_j\mathcal{X}_{i}+\mathcal{X}_{i}^TP_i\mathcal{X}_{j})  \mathcal{F}^{-1/2})),\\
	d&=\sigma_{\rm {max}}(\sum_{i=1}^N(\mathcal{F}^{-1/2} \mathcal{W}_i^T(P-P_i)\mathcal{W}_i \mathcal{F}^{-1/2})),
	\end{split}
	\end{equation}
	in which $\mathcal{F}$, $\mathcal{W}_i$, and $P$ are defined in Lemma \ref{lem 2}. The details are omitted here due to the limited space.
\end{rem}

Before moving on, we provide the following lemma.

\begin{lem}~\cite{Liu2018} \label{lem 3}
	Consider the following system:
	\begin{equation} \nonumber
	x(t+1)=Fx(t)+F_1(t)x(t),
	\end{equation}
	where $x\in \mathbb{R}^n$, $F\in \mathbb{R}^{n\times n}$ is Schur stable, and $F_1(t)$ is well-defined for all $t\in\mathbb{Z}^+$. If $F_1(t)$  exponentially converges to zero, then we have $\lim\limits_{t\to \infty}x(t)=0$.
\end{lem}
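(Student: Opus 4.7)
The plan is to view $F_1(t)x(t)$ as a vanishing perturbation driving the Schur stable linear system $x(t+1)=Fx(t)$, and to quantify the decay with a discrete variation-of-parameters formula combined with a Gronwall-type estimate. This is the standard input-to-state style argument for vanishing perturbations of exponentially stable systems, but adapted to the discrete-time case.

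First I would extract the two exponential rates. Since $F$ is Schur stable, its spectral radius is strictly less than one, so there exist constants $c\geq 1$ and $\rho\in(0,1)$ with $\|F^k\|\leq c\rho^k$ for all $k\geq 0$ (equivalently, pick $P\succ 0$ solving $F^TPF-P=-I$ and work with the norm induced by $P$). The assumption that $F_1(t)$ converges exponentially to zero gives constants $M\geq 0$ and $\mu\in(0,1)$ such that $\|F_1(t)\|\leq M\mu^t$ for all $t\geq 0$.

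Next, I would unfold the recursion via discrete variation of parameters, treating $F_1(k)x(k)$ as an inhomogeneous forcing:
\begin{equation*}
x(t)=F^t x(0)+\sum_{k=0}^{t-1}F^{t-1-k}F_1(k)x(k).
\end{equation*}
Taking norms and substituting $u(t)=\rho^{-t}\|x(t)\|$ converts this into the discrete Gronwall-type inequality
\begin{equation*}
u(t)\leq c\|x(0)\|+\frac{cM}{\rho}\sum_{k=0}^{t-1}\mu^k u(k).
\end{equation*}
Applying the standard discrete Gronwall lemma yields
\begin{equation*}
u(t)\leq c\|x(0)\|\prod_{k=0}^{t-1}\Bigl(1+\tfrac{cM}{\rho}\mu^k\Bigr)\leq c\|x(0)\|\exp\Bigl(\tfrac{cM}{\rho(1-\mu)}\Bigr),
\end{equation*}
which is a \emph{finite} constant independent of $t$ because $\sum_{k=0}^{\infty}\mu^k<\infty$. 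Hence $\|x(t)\|\leq C\rho^t$ for some $C>0$, proving $x(t)\to 0$ exponentially fast.

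The only delicate step is the bookkeeping of exponents in the convolution sum: one must carefully absorb the factor $\rho^{t-1-k}$ into the change of variable $u(t)=\rho^{-t}\|x(t)\|$ so that the remaining weight $\mu^k$ is summable, which is precisely what makes the Gronwall constant finite. An alternative and essentially equivalent route is a Lyapunov argument with $V(x)=x^TPx$: expanding $V(x(t+1))$ produces a dominant nonpositive term $-\|x(t)\|^2$ together with cross-terms of order $\|F_1(t)\|\,\|x(t)\|^2$ and $\|F_1(t)\|^2\|x(t)\|^2$, which for $t$ large enough are dominated by $\tfrac12\|x(t)\|^2$, yielding eventual strict decrease of $V$ and the same conclusion. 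I would prefer the variation-of-parameters route since it directly delivers the exponential rate that will be needed downstream in the consensus analysis.
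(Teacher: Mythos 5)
Your argument is correct. Note first that the paper does not actually prove this lemma: it is stated as a citation to the reference [Liu2018], so there is no in-paper proof to compare against. Your variation-of-parameters plus discrete Gronwall argument is a complete and standard self-contained proof: the identity $x(t)=F^t x(0)+\sum_{k=0}^{t-1}F^{t-1-k}F_1(k)x(k)$ is the correct unfolding of the recursion, the substitution $u(t)=\rho^{-t}\|x(t)\|$ does turn the normed inequality into $u(t)\leq c\|x(0)\|+\frac{cM}{\rho}\sum_{k=0}^{t-1}\mu^k u(k)$ (since $\rho^{-t}\rho^{t-1-k}\mu^k\|x(k)\|=\rho^{-1}\mu^k u(k)$), and the summability of $\mu^k$ makes the Gronwall product uniformly bounded. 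In fact you obtain the stronger conclusion $\|x(t)\|\leq C\rho^t$, i.e.\ exponential convergence at the rate of the unperturbed system, which is more than the lemma asserts. Two minor points worth stating explicitly if you write this up: (i) the existence of $c\geq 1$, $\rho\in(0,1)$ with $\|F^k\|\leq c\rho^k$ follows from the spectral radius of $F$ being strictly less than one (choose $\rho$ between the spectral radius and $1$); (ii) your alternative Lyapunov route also works but only gives eventual decrease of $V$ after a finite time $T$, so you would need to add the observation that $\|x(T)\|$ is finite because the perturbed recursion is well-defined on $[0,T]$ --- the variation-of-parameters route avoids this bookkeeping and is the better choice here.
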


We are ready to present the main result of this section.

\begin{thm} \label{thm3}
	Assume that Assumptions \ref{asp 1}-\ref{asp 3} hold and the input matrix $B$ in (\ref{equ:1}) is invertible. { If all the eigenvalues of $\bar{L}$ are located in the consensus region $\mathcal{Y}$ defined as in Lemma \ref{lem 2},} then the data-driven consensus protocol (\ref{equ:dis_control_law}), with the initial feedback matrix $K_i(0)$ calculated in Theorem \ref{thm1} and { $\mathcal{O}$ calculated in Lemma \ref{lem 1},} solves Problem 1.
\end{thm}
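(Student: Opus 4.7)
The plan is to recognize that the theorem combines two independent convergence facts, the gain synchronization of Lemma \ref{lem 1} and the Schur stability of Lemma \ref{lem 2}, through the perturbation lemma \ref{lem 3}. So I would aim to rewrite the closed-loop error dynamics as a nominal Schur-stable system plus an exponentially vanishing time-varying perturbation.

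First I would introduce the consensus error $\tilde{x}_i(t)=x_i(t)-x_0(t)$ for $i=1,\dots,N$ and stack them as $\tilde{x}(t)=[\tilde{x}_1^T(t),\dots,\tilde{x}_N^T(t)]^T$. Since $u_0=0$ and the leader obeys $x_0(t+1)=Ax_0(t)$, substituting the coupling term $\sum_{j=0}^N a_{ij}(x_i-x_j)/(1+d_i)$ and using the fact that $L_{fl}\mathbf{1}+L_{ff}\mathbf{1}=0$ shows that this coupling reduces to the $i$-th row of $\bar L\,\tilde x(t)$. Hence each error evolves as
\begin{equation}\nonumber
\tilde{x}_i(t+1)=A\tilde{x}_i(t)+BK_i(t)\sum_{j=1}^N\bar{L}_{ij}\tilde{x}_j(t).
\end{equation}

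Next I would exploit Lemma \ref{lem 1}: with $\mathcal{O}$ chosen as stated and under Assumption \ref{asp 1}, $K_i(t)\to K_0=\frac{1}{N}\sum_{i=1}^N K_i(0)$ exponentially. Writing $K_i(t)=K_0+\Delta K_i(t)$ with $\Delta K_i(t)\to 0$ exponentially and stacking over $i$ yields
\begin{equation}\nonumber
\tilde{x}(t+1)=\bigl(I_N\otimes A+\bar{L}\otimes BK_0\bigr)\tilde{x}(t)+F_1(t)\tilde{x}(t),
\end{equation}
where $F_1(t)=\operatorname{blockdiag}\bigl(B\Delta K_i(t)\bigr)_{i=1}^N(\bar{L}\otimes I_n)$. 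Because the block-diagonal factor converges exponentially to zero and $\bar{L}\otimes I_n$ is a fixed matrix, $F_1(t)$ itself converges exponentially to zero.

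Then I would invoke Lemma \ref{lem 2}: since $B$ is invertible and the eigenvalues of $\bar{L}$ all lie in $\mathcal{Y}$, the nominal matrix $F:=I_N\otimes A+\bar{L}\otimes BK_0$ is Schur stable. Therefore the closed-loop error system fits the hypothesis of Lemma \ref{lem 3} with $F$ Schur and $F_1(t)$ exponentially vanishing, which immediately gives $\lim_{t\to\infty}\tilde{x}(t)=0$, i.e., $\lim_{t\to\infty}[x_i(t)-x_0(t)]=0$ for all $i=1,\dots,N$, solving Problem \ref{prob 1}.

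The routine steps are the change of variables and the stacking algebra; the only conceptually subtle point is the second step, where the heterogeneous gains $K_i(t)$ have to be split cleanly into the common $K_0$ (which matches the Schur-stable nominal matrix) plus a perturbation whose block-diagonal structure still permits the decay estimate used in Lemma \ref{lem 3}. Once that decomposition is written down, the three preceding results assemble with no further work.
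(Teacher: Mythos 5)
Your proposal is correct and follows essentially the same route as the paper's own proof: both define the tracking errors $\tilde{x}_i=x_i-x_0$, use Lemma \ref{lem 1} to split $K_i(t)$ into $K_0$ plus an exponentially vanishing part, stack the error dynamics as the Schur-stable nominal matrix $I_N\otimes A+\bar{L}\otimes BK_0$ (guaranteed by Lemma \ref{lem 2}) plus a vanishing perturbation, and conclude via Lemma \ref{lem 3}. Your block-diagonal factorization of the perturbation is just a rewriting of the paper's row-stacked form $[\bar{L}_i\otimes B\tilde{K}_i(t)]$, so there is no substantive difference.
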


\begin{proof}
Let $\tilde{K}_i(t)=K_i(t)-K_0$ and $\tilde{x}_i(t)=x_i(t)-x_0(t)$. According to Lemma \ref{lem 1}, $\tilde{K}_i(t)$ converges to zero. Substituting (\ref{equ:dis_control_law}) into (\ref{equ:1}) yields the following closed-loop network dynamics:
\begin{equation} \label{equ:11}
\begin{split}
x_i(t+1)=Ax_i(t)+&BK_0\sum_{j=0}^{N}\frac{a_{ij}}{1+\sum_{j=0}^N a_{ij}}[\tilde{x}_i(t)-\tilde{x}_j(t)] \\
+B\tilde{K}_i(t)&\sum_{j=0}^{N}\frac{a_{ij}}{1+\sum_{j=0}^N a_{ij}}[\tilde{x}_i(t)-\tilde{x}_j(t)].
\end{split}
\end{equation}
It then follows that
\begin{equation} \label{equ:12}
\begin{split}
\tilde{x}_i(t+1)=A\tilde{x}_i(t)+&BK_0\sum_{j=0}^{N}\frac{a_{ij}}{1+d_i}[\tilde{x}_i(t)-\tilde{x}_j(t)] \\
+B\tilde{K}_i(t)&\sum_{j=0}^{N}\frac{a_{ij}}{1+d_i}[\tilde{x}_i(t)-\tilde{x}_j(t)].
\end{split}
\end{equation}
Let $\tilde{x}(t)=[\tilde{x}_1^T(t),\cdots,\tilde{x}_N^T(t)]^T$ and (\ref{equ:12}) can be rewritten into the following compact form:
\begin{equation} \label{equ:13}
\begin{split}
\tilde{x}(t+1)=&[I_{N}\otimes A+\bar{L}\otimes BK_0]\tilde{x}(t)\\
&+\begin{bmatrix}
\bar{L}_1\otimes B\tilde{K}_1(t)\\
\vdots\\
\bar{L}_{N}\otimes B\tilde{K}_N(t)
\end{bmatrix} \tilde{x}(t),
\end{split}
\end{equation}
where $\bar{L}_i$ represents the $i$th row of $\bar{L}$.
Note that  $\begin{bmatrix}
(\bar{L}_1\otimes B\tilde{K}_1(t))^T&
\cdots&
(\bar{L}_{N}\otimes B\tilde{K}_N(t))^T
\end{bmatrix}^T$ exponentially converges to zero. According to Lemma \ref{lem 2}, the matrix $[I_{N}\otimes A+\bar{L}\otimes BK_0]$ is Schur stable, if all the eigenvalues of $\bar{L}$ are located in the consensus region $\mathcal{Y}$. Therefore,
in view of Lemma \ref{lem 3}, we can derive from (\ref{equ:13}) that $\lim\limits_{t\to \infty}\tilde{x}(t)=0$.  	\end{proof}

\begin{rem}
	Data-driven consensus problem is also studied in previous works \cite{Wang2022,LiData2023,Zhang2023}. However, the control algorithms in these works require an identical data-based feedback gain for all agents, essentially demanding a centralized mechanism to collect data, compute the gain, and assign it to every agent. By contrast, our approach provides a distributed control architecture wherein each follower computes its initial local gain using its own locally sampled data. To tackle the heterogeneity induced by different data-based gains, an interaction mechanism is designed to synchronize the feedback gain $K_i(t)$ in (\ref{equ:7a}).
	Besides, the requirement of system identification in \cite{Zhang2023} is avoided for the data-driven consensus protocol in this section.
\end{rem}

\section{Localized Data-driven Control with Noisy Data}\label{sec 4}

\subsection{Problem formulation} \label{subsection noisy}

In this section, we extend to consider the localized data-driven consensus control problem with noise-corrupted data, leveraging the S-procedure \cite{Polik2007A} and the informativity approach \cite{Waarde2020,Waarde2020data}. In practice, the agent dynamics are often subject to ubiquitous external perturbations such as winds and measurement errors \cite{Dierks2009Output}. Thus, it is imperative to design data-driven consensus control protocols with noise-corrupted data. In this section, we consider the following agents:
\begin{equation} \label{equ:noise_system}
\begin{split}
x_i(t+1)=Ax_i(t)+Bu_i(t)+d_i(t), ~i=0,1,\cdots,N,
\end{split}
\end{equation}
where external disturbances $d_i\in\mathbb{R}^n$ are Lebesgue measurable and bounded. The control input of the leader is still set to be $u_0=0$. 
{ It should be noted that in this section, we consider the presence of process noise signals during the data collection process while maintaining control of the original system (\ref{equ:1}).}
We sample data from both the leader and followers and obtain the following data matrices:
\begin{equation} \label{equ:noise_data}
\begin{split}
X_i=&\begin{bmatrix}
x_i(0) & x_i(1) & \cdots & x_i(T)
\end{bmatrix}\\
U_{i-}=&\begin{bmatrix}
u_i(0) & u_i(1) & \cdots & u_i(T-1)
\end{bmatrix}, \\
\end{split}
\end{equation}
where $i=0,1,\cdots,N$.
Next, define
\begin{equation} \nonumber
\begin{split}
X_{i-}=&\begin{bmatrix}
x_i(0) & x_i(1) & \cdots & x_i(T-1) 
\end{bmatrix}, \\
X_{i+}=&\begin{bmatrix}
x_i(1) & x_i(2) & \cdots & x_i(T)
\end{bmatrix}.\\
D_i=&\begin{bmatrix}
d_i(0) & d_i(1) & \cdots & d_i(T-1)
\end{bmatrix},
\end{split}
\end{equation}
According to (\ref{equ:noise_dynamics}), it is straightforward to note that the data matrices above satisfy the following constraints:
\begin{equation} \label{equ:noise_dynamics}
\begin{split}
X_{i+}=AX_{i-}+BU_{i-}+D_i, \quad i=0,1,\cdots,N.
\end{split}
\end{equation}

We make the following assumption on the additive noise matrix $D_i$, which also appears in several existing results \cite{Waarde2022,Bisoffi2021,Waarde2022Dissipativity}.

\begin{assum} \label{asp 5}
	The noise matrix $D_i$ is unknown and satisfies
	\begin{equation}\label{equ:noise_constraint}
	\begin{split}
	\begin{bmatrix}
	I & D_i
	\end{bmatrix}\begin{bmatrix}
	N_{11} & N_{12} \\
	N_{21} & N_{22}
	\end{bmatrix}\begin{bmatrix}
	I\\
	D_i^T
	\end{bmatrix}\geq 0,~ i=0,1,\cdots,N,
	\end{split}
	\end{equation}
	where { known matrices} $N_{11}>0$, $N_{22}<0$, and $N_{12}^T=N_{21}$ { are of suitable dimensions}. 
\end{assum}

From (\ref{equ:noise_dynamics}) and (\ref{equ:noise_constraint}), we can obtain that 
\begin{equation} \label{equ:noisy_system_identification}
\begin{split}
\begin{bmatrix}
I\\
A^T\\
B^T
\end{bmatrix}^T\begin{bmatrix}
I & X_{i+}\\
0 & -X_{i-}\\
0 & -U_{i-}
\end{bmatrix}\begin{bmatrix}
N_{11} & N_{12} \\
N_{21} & N_{22}
\end{bmatrix}\begin{bmatrix}
I & X_{i+}\\
0 & -X_{i-}\\
0 & -U_{i-}
\end{bmatrix}^T\begin{bmatrix}
I\\
A^T\\
B^T
\end{bmatrix}\geq 0,
\end{split}
\end{equation}
where $i=0,1,\cdots,N$,
implying that all the systems that can generate data (\ref{equ:noise_data}) with the constraint (\ref{equ:noise_constraint}) satisfy (\ref{equ:noisy_system_identification}). Then, as in \cite{Waarde2022}, we define a system set as
$\mathcal{Q}_i=\{(A,B)|(A,B)\; {\rm{satifies}}\; (\ref{equ:noisy_system_identification})\}$. Obviously, the true system $(A,B)\in\mathcal{Q}_i$ for $i=0,1,\cdots,N$.

Instead of (\ref{equ:dis_control_law}), we propose a different data-driven consensus protocol for the agents in (\ref{equ:1}) as follows:
\begin{equation} \label{equ:control_law_noise}
\begin{split}
u_i(t)&=\alpha K_i(t)\sum_{j=0}^{N}a_{ij}\bigg(x_i(t)-x_j(t)\bigg), \\
K_i(t+1)&=K_i(t)+\sum_{j=0}^{N}w_{ij}\bigg(K_j(t)-K_i(t)\bigg),\\ 
\end{split}
\end{equation}
for $i=1,\cdots,N$,
where $\alpha$ is a scalar,  $w_{ij}=\frac{a_{ij}}{1+d_i}$, $w_{ii}=\frac{1}{1+d_i}$.

The problem we intend to address in this section is then described as follows.
\begin{prob}
	Design  gain matrices $K_i(0)$ and  coupling gain $\alpha$ using collected noise-corrupted data (\ref{equ:noise_data}) such that the protocol (\ref{equ:control_law_noise}) achieves leader-follower consensus.
\end{prob}

\subsection{Data-driven control design}\label{subsection noisy_design}

Firstly, we extend the definition of the data informativity \cite{Waarde2020} to the case of the multi-agent system in (\ref{equ:1}).

{\bf Definition 1}~ Suppose that Assumptions \ref{asp 1}-\ref{asp 5} hold. The collected data $(X_i,U_{i-})$ are informative for consensus, if there exists a data-based feedback gain matrix $K_i(0)$ such that $I_N\otimes A+\alpha L_{ff}\otimes BK_i(0)$ is Schur stable for all $(A,B)\in\mathcal{Q}_i$.

Before moving forward, we introduce the following lemma.

\begin{lem}~\cite{Xie1992H}   \label{lem 4}
	There exists a positive-definite matrix $\mathcal{P}$ such that
	\begin{equation} \nonumber
	\begin{split}
	\mathcal{P}(A+\mathfrak{T}_1F_1\mathcal{K}_1)^T+(A+\mathfrak{T}_1F_1\mathcal{K}_1)\mathcal{P}<0
	\end{split}
	\end{equation} 
	for all admissible uncertainty $F_1(t)$ satisfying $F_1^TF_1\leq \varrho^2I$ if and only if there exists a scalar $q>0$ such that
	\begin{equation} \nonumber
	\begin{split}
	\mathcal{P}A^T+A\mathcal{P}+\frac{1}{q}\mathcal{P}\mathcal{K}_1^T\mathcal{K}_1\mathcal{P}+q\varrho^2\mathfrak{T}_1\mathfrak{T}_1^T<0.
	\end{split}
	\end{equation}
\end{lem}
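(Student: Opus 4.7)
The plan is to prove the equivalence in two directions: sufficiency by a matrix Young's inequality, and necessity by an S-procedure argument (or, equivalently, by exhibiting the worst-case $F_1$ via a singular value decomposition).

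For the sufficiency direction, I first expand the quadratic form as
$$\mathcal{P}(A+\mathfrak{T}_1F_1\mathcal{K}_1)^T+(A+\mathfrak{T}_1F_1\mathcal{K}_1)\mathcal{P}=\mathcal{P}A^T+A\mathcal{P}+\mathfrak{T}_1F_1\mathcal{K}_1\mathcal{P}+\mathcal{P}\mathcal{K}_1^TF_1^T\mathfrak{T}_1^T,$$
and apply the elementary matrix inequality $XY^T+YX^T\leq qXX^T+q^{-1}YY^T$ with $X=\mathfrak{T}_1F_1$ and $Y=\mathcal{P}\mathcal{K}_1^T$ to bound the cross-term by $q\mathfrak{T}_1F_1F_1^T\mathfrak{T}_1^T+q^{-1}\mathcal{P}\mathcal{K}_1^T\mathcal{K}_1\mathcal{P}$. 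The constraint $F_1^TF_1\leq\varrho^2 I$ implies $F_1F_1^T\leq\varrho^2 I$ (same nonzero singular values), so the cross-term is majorized by $q\varrho^2\mathfrak{T}_1\mathfrak{T}_1^T+q^{-1}\mathcal{P}\mathcal{K}_1^T\mathcal{K}_1\mathcal{P}$. Substituting into the assumed second inequality gives the first inequality for every admissible $F_1$.

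For the necessity direction, I would rewrite the robust condition as $v^T\Xi(F_1)v<0$ for every $v\neq 0$ and every $F_1$ with $F_1^TF_1\leq\varrho^2 I$. Introducing the auxiliary vector $w=F_1\mathcal{K}_1\mathcal{P} v$, the norm bound on $F_1$ yields the single quadratic constraint $w^Tw\leq\varrho^2\,v^T\mathcal{P}\mathcal{K}_1^T\mathcal{K}_1\mathcal{P} v$. The negative-definiteness of $\Xi(F_1)$ uniformly in $F_1$ then becomes: a single quadratic form in $(v,w)$ is negative on the set defined by one quadratic inequality. Yakubovich's S-lemma supplies a multiplier $q>0$ for which the unconstrained inequality holds, and a Schur-complement rearrangement identifies the resulting condition with the second inequality in the statement.

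The main obstacle is the necessity direction, and specifically justifying losslessness of the S-procedure when $F_1$ is matrix-valued rather than scalar. My plan to sidestep this is to argue via homogeneity that the worst case lies on the boundary $F_1^TF_1=\varrho^2 I$, then reduce to a vector problem by fixing $v$ and maximizing $\mathrm{Re}(v^T\mathcal{P}\mathcal{K}_1^TF_1^T\mathfrak{T}_1^Tv)$ over such $F_1$ via the SVD of $\mathfrak{T}_1^Tv\,v^T\mathcal{P}\mathcal{K}_1^T$; the attained maximum is exactly $\varrho\|\mathfrak{T}_1^Tv\|\,\|\mathcal{K}_1\mathcal{P} v\|$, at which point the scalar S-lemma (for two quadratic forms in one vector variable) applies cleanly and produces $q$. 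The sufficiency portion is routine once Young's inequality is invoked.
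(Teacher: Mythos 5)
The paper does not prove this lemma at all: it is quoted verbatim from the cited reference (Xie, 1992) as a known robust-stability result, so there is no in-paper argument to compare against. Your proposal is a correct reconstruction of the standard proof of this classical fact (Petersen's lemma in the form used by Xie). The sufficiency direction is exactly right: Young's inequality $XY^T+YX^T\leq qXX^T+q^{-1}YY^T$ with $X=\mathfrak{T}_1F_1$, $Y=\mathcal{P}\mathcal{K}_1^T$, together with $F_1F_1^T\leq\varrho^2 I$ (which does follow from $F_1^TF_1\leq\varrho^2 I$ since the two products share nonzero eigenvalues), majorizes the uncertain cross-term and yields the robust inequality from the $q$-dependent one. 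The necessity direction via the lifting $w=F_1\mathcal{K}_1\mathcal{P}v$ and Yakubovich's S-lemma is also the right route, and it resolves the very concern you raise: once you pass to the pair $(v,w)$, the matrix-valued uncertainty has been eliminated and you are applying the lossless one-constraint S-lemma to two genuine quadratic forms in a real vector, after which a Schur complement on the $qI$ block recovers the stated inequality (up to the harmless reparametrization $q\mapsto 1/(q\varrho^2)$).

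Three small points need to be made explicit for the necessity argument to close. First, you must verify that every $(v,w)$ satisfying $w^Tw\leq\varrho^2 v^T\mathcal{P}\mathcal{K}_1^T\mathcal{K}_1\mathcal{P}v$ is actually realized by an admissible $F_1$ (take the rank-one $F_1=w(\mathcal{K}_1\mathcal{P}v)^T/\|\mathcal{K}_1\mathcal{P}v\|^2$ when $\mathcal{K}_1\mathcal{P}v\neq 0$); otherwise the lifted condition is only necessary, not equivalent. Second, the strict S-lemma requires a Slater point for the constraint, and the multiplier it returns is a priori only $q\geq 0$; strict positivity follows because the $(2,2)$ block of the lifted matrix inequality is $qI$ and must be positive definite. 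Third, your fallback plan is slightly muddled: after maximizing over $F_1$ for fixed $v$ you obtain $v^T(\mathcal{P}A^T+A\mathcal{P})v+2\varrho\|\mathfrak{T}_1^Tv\|\,\|\mathcal{K}_1\mathcal{P}v\|<0$, which is \emph{not} a quadratic form in $v$, so the scalar S-lemma does not apply ``cleanly'' there; you still need the $(v,w)$ lifting (or Petersen's original uniform-$\epsilon$ compactness argument) to extract a single $q$. None of these is a fatal gap, but they are the places where a written-out proof would have to do real work.
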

Next, we present the main result on data-driven consensus with noise-corrupted data. 	
\begin{thm} \label{thm4}
	Assume that Assumptions \ref{asp 1}-\ref{asp 5} hold. The collected data $(X_i,U_{i-}), i=0,1,\cdots,N$ are informative for consensus, if there exists  $\Phi_i>0$, $F_i$ and scalars $\epsilon_i\geq0$, $\gamma_i>0$ and $\tau>0$ satisfying the following LMIs:

	\begin{equation} \label{equ:S-condition}
	\begin{split}
	&\begin{bmatrix}
	\Phi_i-\gamma_iI & 0 & 0 & 0 & 0 & 0\\
	0 & 0 & 0 & \Phi_i & 0 & 0\\
	0 & 0 & -\tau\nu^2 I &  F_i & 0 & 0\\
	0 & \Phi_i^T &  F_i^T & \Phi_i & F_i^T & 0\\
	0 & 0 & 0 & F_i & \tau I & 0 \\
	0 & 0 & 0 & 0 & 0 & I 
	\end{bmatrix} 
	-\epsilon_i \begin{bmatrix}
	I &  X_+\\
	0 &  -X_{i-}\\
	0 &  -U_{i-}\\
	0 & 0\\
	0 & 0\\
	0 & 0
	\end{bmatrix}\\
	&\times \begin{bmatrix}
	{N}_{11} & {N}_{12} \\
	{N}_{21} & {N}_{22}
	\end{bmatrix}\begin{bmatrix}
	*
	\end{bmatrix}^T>0, i=0,1,\cdots,N,
	\end{split}
	\end{equation}
	where $\nu=\frac{\lambda_N-\lambda_1}{\lambda_N+\lambda_1}$, 
	$\lambda_N$ and $\lambda_1$ denote the largest and smallest eigenvalues of $L_{ff}$, respectively. Then, the protocol (\ref{equ:control_law_noise}) with $\alpha=\frac{2}{\lambda_1+\lambda_N}$ and  $K_i(0)=F_i\Phi_i^{-1}$ { solves Problem 2. 	}
\end{thm}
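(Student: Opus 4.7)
The plan is to parallel the proof of Theorem \ref{thm3}, but to replace the Riccati-based consensus region argument by a robust data-driven Lyapunov certificate that accounts for the noise. Define $\tilde{x}_i(t) = x_i(t) - x_0(t)$. The second line of (\ref{equ:control_law_noise}) is a standard linear consensus update on the matrix variables $K_i(t)$, so, as in Theorem \ref{thm3}, each $K_i(t)$ converges exponentially to a common limit and the perturbation $\tilde{K}_i(t)$ driven into the state equation vanishes exponentially. Writing the closed-loop follower-error dynamics as a time-invariant part plus a vanishing perturbation, Lemma \ref{lem 3} reduces the problem to showing that the data $(X_i,U_{i-})$ are informative for consensus in the sense of Definition 1, i.e., that $I_N \otimes A + \alpha L_{ff} \otimes BK_i(0)$ is Schur stable for every $(A,B) \in \mathcal{Q}_i$.

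Under Assumption \ref{asp 1} the follower subgraph is undirected, so $L_{ff}$ is symmetric and positive definite; an orthogonal diagonalization decouples the Kronecker-structured closed-loop matrix into $N$ scalar-weighted blocks $A + \alpha \lambda_k BK_i(0)$ with $k=1,\ldots,N$. The choice $\alpha = 2/(\lambda_1+\lambda_N)$ writes $\alpha \lambda_k = 1+\delta_k$ with $|\delta_k| \leq \nu$ and $\nu = (\lambda_N-\lambda_1)/(\lambda_N+\lambda_1)$, thereby reducing informativity for consensus to a single robust Schur-stability requirement: for every $(A,B) \in \mathcal{Q}_i$ and every scalar $\delta$ with $|\delta| \leq \nu$, the matrix $A + (1+\delta)BK_i(0) = (A+BK_i(0)) + \delta BK_i(0)$ must be Schur stable.

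I would then look for a quadratic certificate $V(x) = x^T \Phi_i^{-1} x$, rewrite the discrete-time Lyapunov inequality $\Phi_i - (A+(1+\delta)BK_i(0))\Phi_i(A+(1+\delta)BK_i(0))^T > 0$ using the change of variable $F_i = K_i(0)\Phi_i$, and treat the two uncertainties separately. The eigenvalue-spread perturbation $\delta BK_i(0)$ is norm-bounded by $\nu$, so Lemma \ref{lem 4} (the discrete-time Xie-type result) supplies a sufficient condition with a scalar multiplier $\tau > 0$ that contributes the blocks $-\tau\nu^2 I$ and $\tau I$ together with the off-diagonal $F_i$ entries appearing in (\ref{equ:S-condition}). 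The remaining $(A,B)$ uncertainty is constrained by the noise QMI (\ref{equ:noisy_system_identification}) and is absorbed through the S-procedure with multiplier $\epsilon_i \geq 0$: the QMI is subtracted from the certificate, producing the second term of (\ref{equ:S-condition}). Schur complements then linearize the resulting quadratic dependence on $(A,B)$ and on $K_i(0)$, producing exactly the $6\times 6$ block inequality (\ref{equ:S-condition}), with $\gamma_i > 0$ providing the strict-inequality margin that yields Schur stability rather than mere marginal stability.

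The main obstacle will be the simultaneous bookkeeping of these two independent sources of uncertainty, namely the eigenvalue-spread scalar $\delta$ and the noise-induced model uncertainty $(A,B)\in\mathcal{Q}_i$, inside a single LMI, while choosing the augmented matrix on which Schur complements act so that the S-procedure multiplier $\epsilon_i$ lines up exactly with the noise QMI and the Xie-type block from Lemma \ref{lem 4} is compatible with the Lyapunov block. Once the equivalence between (\ref{equ:S-condition}) and the robust Lyapunov condition is set up, feasibility of (\ref{equ:S-condition}) provides a common quadratic certificate valid for every admissible $(A,B,\delta)$, which combined with the exponential synchronization of the gains and Lemma \ref{lem 3} shows that the protocol (\ref{equ:control_law_noise}) solves Problem 2.
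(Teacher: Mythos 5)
Your proposal follows essentially the same route as the paper's proof: spectral decomposition of the symmetric $L_{ff}$ into blocks $A+\alpha\lambda_k BK_i(0)$, the choice $\alpha=2/(\lambda_1+\lambda_N)$ to reduce the eigenvalue spread to a scalar uncertainty $|\delta|\le\nu$ handled by Lemma \ref{lem 4} with multiplier $\tau$, the S-procedure with $\epsilon_i$ to absorb the noise QMI defining $\mathcal{Q}_i$, Schur complements to reach the $6\times 6$ LMI, and Lemma \ref{lem 3} together with the exponential synchronization of the gains $K_i(t)$ to close the loop. The plan is correct and matches the paper's argument in both structure and the key technical tools.
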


\begin{proof}
Define $$\begin{aligned}
\Psi_i &=[I_N\otimes A+\alpha L_{ff}\otimes BK_i(0)]^T(I_N\otimes P_i)\\
&\quad\times[I_N\otimes A+\alpha L_{ff}\otimes BK_i(0)]-I_N\otimes P_i,
\end{aligned}$$
where $P_i>0$.
According to Definition 1,  the collected data $(X_i,U_{i-})$ are informative for consensus if there exists appropriate $K_i(0)$ and $P_i$ such that $\Psi_i<0$. Let $\Phi_i=P_i^{-1}$ and $F_i=K_i(0)\Phi_i$.  Consequently, $\Psi_i<0$ is equivalent to that 
$$\Phi_i-(A\Phi_i+\alpha \lambda_k BF_i)^T\Phi_i^{-1}(A\Phi_i+\alpha\lambda_k BF_i)>0,$$ for $k=1,\cdots,N$, where $\lambda_k$ denotes the $k$th eigenvalue of $L_{ff}$. This implies that $\Psi_i<0$ can be transformed into the above $N$ inequalities. Choose $\alpha=\frac{2}{\lambda_1+\lambda_N}$.
Evidently, $-\nu\leq\alpha \lambda_k-1\leq \nu$ for $k=1,\cdots,N$.
Next, motivated by \cite{Li2017Robust} and \cite{Li2012Distributed}, it can be inferred that  $A\Phi_i+\alpha \lambda_kBF_i$ is Schur stable for $k=1,\cdots,N$ if $A\Phi_i+(1+\Delta)BF_i$ is Schur stable for all $|\Delta|\leq \nu$. Then, it follows that $\Psi_i<0$, if there exists $\Phi_i>0$ such that  $$\Phi_i-(A\Phi_i+(1+\Delta) BF_i)\Phi_i^{-1}(A\Phi_i+(1+\Delta) BF_i)^T>0,$$ which is equivalent to
\begin{equation} \label{equ:S-lemma1}
\begin{split}
&\begin{bmatrix}
I\\
A^T \\
B^T
\end{bmatrix}^T \begin{bmatrix}
\Phi_i & 0\\
0 & -\begin{bmatrix}
\Phi_i\\
(1+\Delta) F_i 
\end{bmatrix}(\Phi_i)^{-1}\begin{bmatrix}
*
\end{bmatrix}^T
\end{bmatrix}\begin{bmatrix}
I\\
A^T \\
B^T
\end{bmatrix}>0.
\end{split}
\end{equation}
Note that
\begin{equation} \label{equ:SS1}
\begin{split}
\begin{bmatrix}
\Phi_i & 0\\
0 & -\begin{bmatrix}
\Phi_i\\
(1+\Delta) F_i 
\end{bmatrix}(\Phi_i)^{-1}\begin{bmatrix}
*
\end{bmatrix}^T
\end{bmatrix}>0,
\end{split}
\end{equation}
if and only if	\begin{equation} \label{equ:SS2}
\begin{split}
&\begin{bmatrix}
\Phi_i & 0 & 0 & 0 & 0\\
0 & 0 & 0 & \Phi_i & 0\\
0 & 0 & 0 & F_i & 0\\
0 & \Phi_i & F_i^T & \Phi_i & 0\\
0 & 0 & 0 & 0 & I
\end{bmatrix}
+\begin{bmatrix}
0\\
0\\
0\\
F_i^T\\
0
\end{bmatrix}\Delta\begin{bmatrix}
0 & 0 & I & 0 & 0
\end{bmatrix}\\
&\,\,+\begin{bmatrix}
0\\
0\\
I\\
0\\
0
\end{bmatrix}\Delta\begin{bmatrix}
0 & 0 & 0 & F_i & 0
\end{bmatrix}>0.
\end{split}
\end{equation}
{ 	Utilizing Lemma \ref{lem 4}, (\ref{equ:SS2}) holds for all $|\Delta|<\nu$ if and only if there exists a scalar $\tau>0$ such that}
\begin{equation}\label{equ:SS3}
\begin{split}
\begin{bmatrix}
\Phi_i & 0 & 0 & 0 & 0 & 0\\
0 & 0 & 0 & \Phi_i & 0 & 0\\
0 & 0 & -\tau\nu^2 I &  F_i & 0 & 0\\
0 & \Phi_i^T &  F_i^T & \Phi_i & F_i^T & 0\\
0 & 0 & 0 & F_i & \tau I & 0 \\
0 & 0 & 0 & 0 & 0 & I 
\end{bmatrix} >0.
\end{split}
\end{equation}
It is worth noting that using the Schur Complement lemma \cite{Boyd1994Linear} and pre- and post-multiplying  $\begin{bmatrix}
I & 
A &
B
\end{bmatrix}$ and $\begin{bmatrix}
I & 
A &
B
\end{bmatrix}^T$ on (\ref{equ:SS3}) directly leads to (\ref{equ:S-lemma1}), implying that (\ref{equ:SS3}) is a sufficient condition of (\ref{equ:S-lemma1}).

Note that all systems $(A,B)$ in $\mathcal{Q}_i$ satisfy the following constraint:
\begin{equation}\nonumber
\begin{split}
\begin{bmatrix}
I\\
A^T \\
B^T\\
\end{bmatrix}^T
\end{split}\begin{bmatrix}
I &  {X}_{i+}\\
0 &  -{X}_{i-}\\
0 &  -{U}_{i-}\\
\end{bmatrix} \begin{bmatrix}
{N}_{11} & {N}_{12} \\
{N}_{21} & {N}_{22}
\end{bmatrix}\begin{bmatrix}
*
\end{bmatrix}^T>0.
\end{equation}
Then, using the standard S-procedure in the matrix version \cite{Waarde2022} for (\ref{equ:noisy_system_identification}) and (\ref{equ:SS3}) directly leads  to (\ref{equ:S-condition}). 

Next, we can conclude that if (\ref{equ:S-condition}) holds, then (\ref{equ:S-lemma1}) holds for all $(A,B)$ in $\mathcal{Q}_i$. It then follows that $\Psi_i<0$,  implying that the data $(X_i,U_{i-})$ are informative for consensus and thereby $I_N\otimes A+\alpha L_{ff}\otimes BK_i(0)$ is Schur stable with $\alpha=\frac{2}{\lambda_1+\lambda_N}$ and  $K_i(0)=F_i\Phi_i^{-1}$.

Finally, we need to prove that the proposed control protocol (\ref{equ:control_law_noise}) along with the feedback gain matrix $K_i(0)$ obtained by (\ref{equ:S-condition}) can achieve consensus for the agents in (\ref{equ:1}).
Define $\tilde{x}_i(t)=x_i(t)-x_0(t)$ and $\tilde{x}(t)=\begin{bmatrix}
\tilde{x}_1^T(t) & \tilde{x}_2^T(t) & \cdots & \tilde{x}_N^T(t)
\end{bmatrix}^T$. Substituting (\ref{equ:control_law_noise}) into (\ref{equ:1}) gives
\begin{equation} \label{equ:closed_loop_noisy}
\begin{split}
\tilde{x}(t)=&[I_{N}\otimes A+L_{ff}\otimes BK_0(0)]\tilde{x}(t)\\
&+\begin{bmatrix}
(L_{ff})_1\otimes B\tilde{K}_1(t)\\
\vdots\\
(L_{ff})_{N}\otimes B\tilde{K}_N(t)
\end{bmatrix}\tilde{x}(t),
\end{split}
\end{equation}
where $(L_{ff})_i$ represents the $i$th row of $L_{ff}$ and $\tilde{K}_i(t)=K_i(t)-K_0(0)$.
It can be inferred from \cite{Liu2018} that $K_i(t)$ exponentially converges to $K_0(0)$ for $i=1,\cdots,N$. It is worth noting that such $K_0(0)$ renders $I_N\otimes A+\alpha L_{ff}\otimes BK_0(0)$ Schur stable for all $(A,B)$ in $\mathcal{Q}_0$,  as evidenced by the aforementioned analysis. Then we can derive from (\ref{equ:closed_loop_noisy}) that $\tilde{x}(t)\to 0$ in view of Lemma \ref{lem 3}. This completes the proof. \end{proof}

\begin{rem}
	
	In Theorem \ref{thm4}, we propose a new paradigm different from the methods in Section \ref{sec 3}, to obtain the initial data-based feedback gain matrix $K_i(0)$, leveraging the S-procedure and the informativity approach to account for the additive noise $d$ in (\ref{equ:noise_system}). The primary motivation stems from the inherent limitation of the conventional data-driven LQR algorithm since it cannot yield the accurate solution, which is essential in determining the consensus region, when dealing with noise-corrupted data.
\end{rem}

\begin{rem}
	The works in \cite{Wang2022} also consider noisy data-driven consensus control of multi-agent systems. Nevertheless, the network system in \cite{Wang2022} is transformed into a single linear system represented in compact forms, directly leading to a high-dimensional LMI  that is hard to solve for large-scale networks. On the contrary, the method given in Theorem \ref{thm4} allows each agent to compute its own initial gain matrix with a low-dimensional LMI, making the proposed algorithm more applicable and accessible in complex network scenarios.  
\end{rem}

\section{Localized Data-driven Control with Leader's Data}\label{sec Centralized}

Note that we sample data from all the followers, and from both the leader and the followers in Sections \ref{sec 3} and \ref{sec 4}, respectively, to construct the initial feedback gain matrices $K_i(0)$. In this section, we provide another data-driven consensus algorithm encompassing both noiseless and noisy scenarios, in which only the leader samples its own data and computes its feedback gain matrix while the followers obtain such gain information via the designed interaction scheme.  The system dynamics are still described by  (\ref{equ:1}), and the control input of the leader agent is set to $u_0=0$.

We start from the simple case where the sampled data are noise-free.  We only collect state and input data on $T$ finite sequences from the leader and construct the following data matrices:
\begin{equation} \label{equ:single data}
\begin{split}
U_{-}=&\begin{bmatrix}
u_0(0) & u_0(1) & \cdots & u_0(T-1)
\end{bmatrix}, \\
X=&\begin{bmatrix}
x_0(0) & x_0(1) & \cdots & x_0(T)
\end{bmatrix}.
\end{split}
\end{equation}
Define
\begin{equation} \nonumber
\begin{split}
X_{-}=&\begin{bmatrix}
x_0(0) & x_0(1) & \cdots & x_0(T-1) 
\end{bmatrix}, \\
X_{+}=&\begin{bmatrix}
x_0(1) & x_0(2) & \cdots & x_0(T)
\end{bmatrix}.
\end{split}
\end{equation}

\begin{assum} \label{asp 6}
	The communication graph contains a directed spanning tree with the leader as the root node.
\end{assum}

The control protocol for the agents in (\ref{equ:1}) is proposed as follows:
\begin{equation} \label{equ:control_law_3}
\begin{split}
u_i(t)&=c_i(t)K_i(t)\sum_{j=0}^{N}w_{ij}\bigg(x_i(t)-x_j(t)\bigg), \\
K_i(t+1)&=K_i(t)+\sum_{j=0}^{N}w_{ij}\bigg(K_j(t)-K_i(t)\bigg),\\
c_i(t+1)&=c_i(t)+\sum_{j=0}^{N}w_{ij}\bigg(c_j(t)-c_i(t)\bigg), 
\end{split}
\end{equation}
for $i=1,\cdots,N$,
where $c_0(0)=c_0$, $K_0(0)=K_0$, $K_0$ and $c_0$ are the data-based gains calculated by the leader, $w_{ij}$ and $w_{ii}$ are defined as in (\ref{equ:control_law_noise}).

It can be inferred from \cite{Liu2018} that, under Assumption \ref{asp 6}, the feedback gains $K_i(t)$ and $c_i(t)$ in (\ref{equ:control_law_3})
exponentially converge to $K_0$ and $c_0$, respectively. The problem we want to solve in this section is then described as follows:

\begin{prob}\label{prob 3}
	Design the gain matrix $K_0$ and the coupling gain $c_0$ using collected data (\ref{equ:single data}) such that the control law (\ref{equ:control_law_3}) achieves leader-follower consensus.
\end{prob}

In the following theorem, we present the main result of this section.
\begin{thm} \label{thm5}
	Suppose that Assumptions \ref{asp 3}, \ref{asp 6} hold and there exists $\Gamma$ optimizing (\ref{equ:K0}).
	Then the feedback gain matrix $K_0$ can be calculated as $K_0=U_-\Gamma(X_-\Gamma)^{-1}$. The solution $P$ to the ARE (\ref{equ:mare}) is obtained via (\ref{equ:P_noise}).
	Let $\mathcal{M}$ be one solution to the following equation:
	\begin{equation} \label{equ:L0}
	\begin{split}
	\begin{bmatrix}
	K_0\\
	0
	\end{bmatrix}=
	\begin{bmatrix}
	U_- \\
	X_-
	\end{bmatrix}\mathcal{M}.
	\end{split}
	\end{equation}
	Let $\theta=\sigma_{{\rm max}}(Q^{-1/2}\mathcal{M}^TX_+^TPX_+\mathcal{M}Q^{-1/2})$, where  $Q>0$ is a parameter in (\ref{equ:mare}).
	If there exists  a circle $C(h_0,r_0)$ covering all the eigenvalues of $\bar{L}$ such that \begin{equation} \nonumber
	\frac{r_0}{h_0}<\theta^{-1/2},
	\end{equation}
	then protocol (\ref{equ:control_law_3}) with $K_0=U_-\Gamma(X_-\Gamma)^{-1}$ and $c_0=\frac{1}{h_0}$ solves Problem \ref{prob 3}.
\end{thm}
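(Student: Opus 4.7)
The plan is to reduce the consensus claim to Schur stability of $I_N \otimes A + c_0 \bar{L} \otimes BK_0$ via a vanishing-perturbation argument, and then establish this stability through a Lyapunov analysis built on the ARE solution $P$ from \eqref{equ:P_noise}.

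First, I would observe that under Assumption \ref{asp 6}, the updates on $K_i(t)$ and $c_i(t)$ in \eqref{equ:control_law_3} are standard linear averaging dynamics initialized with the leader's values. By \cite{Liu2018}, both $K_i(t)\to K_0$ and $c_i(t)\to c_0$ exponentially. Substituting \eqref{equ:control_law_3} into \eqref{equ:1} and stacking $\tilde{x}_i(t)=x_i(t)-x_0(t)$, the closed-loop error dynamics take the form
\begin{equation*}
\tilde{x}(t+1)=\bigl[I_N\otimes A+c_0\bar{L}\otimes BK_0\bigr]\tilde{x}(t)+\Delta(t)\tilde{x}(t),
\end{equation*}
where $\Delta(t)$ depends linearly on $K_i(t)-K_0$ and $c_i(t)-c_0$ and thus vanishes exponentially. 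By Lemma \ref{lem 3}, it then suffices to show the unperturbed block matrix is Schur stable.

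For that step I would exploit the structural identity implicit in Theorems \ref{thm1}-\ref{thm2}, namely $K_0=-(B^TPB)^{-1}B^TPA$ together with the ARE \eqref{equ:mare}. A direct expansion for any $\eta\in\mathbb{C}$, using $B^TPBK_0=-B^TPA$ and the cancellation $2\mathrm{Re}(\eta)-|\eta|^2-1=-|1-\eta|^2$, should reduce to
\begin{equation*}
(A+\eta BK_0)^HP(A+\eta BK_0)-P=-Q+|1-\eta|^2\bar{M},
\end{equation*}
where $\bar{M}:=A^TPB(B^TPB)^{-1}B^TPA\succeq 0$. Hence $A+\eta BK_0$ is Schur whenever $|1-\eta|^2<1/\lambda_{\max}(Q^{-1/2}\bar{M}Q^{-1/2})$. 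The remaining task is to identify this quantity with the data-based scalar $\theta$: the definition of $\mathcal{M}$ in \eqref{equ:L0} yields $U_-\mathcal{M}=K_0$ and $X_-\mathcal{M}=0$, so $X_+\mathcal{M}=(AX_-+BU_-)\mathcal{M}=BK_0$, and therefore $\mathcal{M}^TX_+^TPX_+\mathcal{M}=K_0^TB^TPBK_0=\bar{M}$. With $c_0=1/h_0$ and any eigenvalue $\lambda_k$ of $\bar{L}$ lying inside $C(h_0,r_0)$, one has $|1-c_0\lambda_k|=|\lambda_k-h_0|/h_0<r_0/h_0<\theta^{-1/2}$, giving Schur stability of every $A+c_0\lambda_k BK_0$ and, by block-diagonalization (via the Jordan form of $\bar{L}$, which is enough since one only needs a common Lyapunov certificate up to a similarity), of the full matrix.

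The main obstacle I anticipate is the algebraic bookkeeping in the complex Lyapunov identity, since under Assumption \ref{asp 6} the matrix $\bar{L}$ need not be symmetric and its eigenvalues are in general complex; this is precisely what forces the covering-circle formulation rather than a real-interval bound. The recombination of $\Delta(t)\tilde{x}(t)$ via Lemma \ref{lem 3} and the averaging convergence of $K_i,c_i$ are then routine, so once the identity above is in place the rest of the argument follows the same template as the proof of Theorem \ref{thm3}.
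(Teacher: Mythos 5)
Your proposal is correct and follows essentially the same route as the paper: reduce to Schur stability of $I_N\otimes A+c_0\bar{L}\otimes BK_0$ via the exponentially vanishing perturbation and Lemma \ref{lem 3}, then identify the data-based $\theta$ with $\sigma_{\max}(Q^{-1/2}K_0^TB^TPBK_0Q^{-1/2})$ through $X_+\mathcal{M}=BK_0$. The only difference is that where the paper invokes the model-based covering-circle result of Movric and Lewis as a citation, you re-derive it explicitly via the Lyapunov identity $(A+\eta BK_0)^HP(A+\eta BK_0)-P=-Q+|1-\eta|^2\bar{M}$, which is a correct and self-contained substitute for that step.
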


\begin{proof}  
Let $\tilde{x}_i(t)=x_i(t)-x_0(t)$, $\tilde{K}_i(t)=K_i(t)-K_0$, and $\tilde{c}_i(t)=c_i(t)-c_0$ for $i=1,\cdots,N$. 
Then, substituting (\ref{equ:control_law_3}) into (\ref{equ:1}) gives
\begin{equation} \label{equ:closed_loop3}
\begin{split}
\tilde{x}_i(t+1)=&A\tilde{x}_i(t)+c_0BK_0\sum_{j=0}^{N}w_{ij}[\tilde{x}_i(t)-\tilde{x}_j(t)] \\
+&\tilde{c}_i(t)BK_0\sum_{j=0}^{N}w_{ij}[\tilde{x}_i(t)-\tilde{x}_j(t)]\\
+&c_0B\tilde{K}_i(t)\sum_{j=0}^{N}w_{ij}[\tilde{x}_i(t)-\tilde{x}_j(t)]\\
+&\tilde{c}_i(t)B\tilde{K}_i(t)\sum_{j=0}^{N}w_{ij}[\tilde{x}_i(t)-\tilde{x}_j(t)]. 
\end{split}
\end{equation}

Let $\tilde{x}(t)=[\tilde{x}_0^T(t),\cdots,\tilde{x}_N^T(t)]^T$ and $\tilde{J}_i(t)=\tilde{c}_i(t)K_0+c_0\tilde{K}_i(t)+\tilde{c}_i(t)\tilde{K}_i(t)$. Then,
following similar lines in the proof of Theorem \ref{thm3}, we rewrite (\ref{equ:closed_loop3}) into the following compact form:
\begin{equation} \label{equ:leader_closed_loop}
\begin{split}
\tilde{x}(t+1)=&[I_{N}\otimes A+c_0\bar{L}\otimes BK_0]\tilde{x}(t)\\
&+\begin{bmatrix}
\bar{L}_1\otimes B\tilde{J}_1(t)\\
\vdots\\
\bar{L}_{N}\otimes B\tilde{J}_N(t)
\end{bmatrix} \tilde{x}(t),
\end{split}
\end{equation}
where $\bar{L}_i$ represents the $i$th row of $\bar{L}$. 
It can be inferred from \cite{Liu2018} that $\tilde{J}_i(t)\to 0$ exponentially fast. Therefore, to prove Theorem \ref{thm5}, it remains to ensure that $I_{N+1}\otimes A+c_0\bar{L}\otimes BK_0$ is Schur stable according to Lemma \ref{lem 3}.

Define 
$\theta=\sigma_{\rm {max}}(Q^{-\frac{1}{2}}{A}^TP{B}({B}^TP{B})^{-1}{B}^TP{A}Q^{-\frac{1}{2}})$,
where $P$ is the solution to the ARE (\ref{equ:mare}).
Recalling the model-based consensus algorithm in \cite{Movric2013Synchronization},  if there exists  a circle $C(h_0,r_0)$ covering all the eigenvalues of $\bar{L}$ such that $r_0/h_0<\theta^{-1/2}$, then the matrix $I_{N}\otimes A+c_0\bar{L}\otimes BK_0$ is Schur stable with $c_0=1/h_0$ and $K_0=-(B^TPB)^{-1}B^TPA$. 

Since $(A,B)$ is unknown, we cannot obtain such $K_0$, $P$, and $\theta$ directly. Therefore, we design the feedback gain matrix $K_0$ as $K_0=U_-\Gamma(X_-\Gamma)^{-1}$, where $\Gamma$ optimizes (\ref{equ:K0}). It can be inferred from Theorem \ref{thm1} that such $K_0$ satisfies  $K_0=-(B^TPB)^{-1}B^TPA$. The solution $P$ to (\ref{equ:mare}) is then obtained via (\ref{equ:P_noise}). Note that $\theta$ can be rewritten as $\theta=\sigma_{\rm {max}}(Q^{-\frac{1}{2}}K_0^TB^TPBK_0Q^{-\frac{1}{2}})$. Suppose $\mathcal{M}$ is a solution to (\ref{equ:L0}). It then follows that $BK_0=X_+\mathcal{M}$. Consequently, $\theta$ can be represented by $\theta=\sigma_{{\rm max}}(Q^{-1/2}\mathcal{M}^TX_+^TPX_+\mathcal{M}Q^{-1/2})$. We can draw a conclusion that if $K_0$ and $c_0$ satisfy the condition in Theorem \ref{thm5}, then $I_{N}\otimes A+c_0\bar{L}\otimes BK_0$ is Schur stable. In view of Lemma \ref{lem 3}, we can deduce from (\ref{equ:leader_closed_loop}) that $\tilde{x}(t)\to0$. This completes the proof. 
\end{proof}

The case where the collected data are noise-corrupted can be similarly studied, following similar steps in Theorems \ref{thm4} and \ref{thm5}. The details are omitted here for brevity.

\begin{rem}
	In this section, we propose a different data-driven control architecture for the multi-agent systems, where the feedback gains are computed using data only sampled from the leader agent and transmitted to followers through an interaction mechanism. It is worth noting that the proposed control protocol (\ref{equ:control_law_3})  is still devised in a distributed fashion without requiring a centralized node.
\end{rem}

\section{Simulation Results} \label{sec 5}
In this section, we will use a simulation example to verify the effectiveness of the proposed schemes. 
The dynamics of the discrete-time agents
are given by (\ref{equ:1}), with
\begin{equation}\nonumber
\begin{split}
A=\begin{bmatrix}
0.707 & 0.707 \\
-0.707 & 0.707
\end{bmatrix}, 
B=\begin{bmatrix}
0.2 & 0\\
0 & 0.2
\end{bmatrix}.
\end{split}
\end{equation}
The network topology is {chosen to be} in Fig. 1, with agent 0 being the leader. 
The subgraph among followers is undirected.
The submatrix $L_{ff}$  { is given by}

\begin{equation} \nonumber
L_{ff}=\begin{bmatrix}
3 & -1 & -2 & 0 & 0 \\
-1 & 10 & -1 & -3 & 0 \\
-2 & -1 & 10 & 0 & -2 \\
0 & -3 & 0 & 10 & -2\\
0 & 0 & -2 & -2 & 4
\end{bmatrix}. 
\end{equation}

{It should be noted that the edge weights between the leader and the followers connected to it are relatively large to satisfy the condition in Lemma \ref{lem 2}, which can be referred to \cite{Movric2013Synchronization} for more details.}

\begin{figure}
	\begin{center}
		\includegraphics[width=6.0cm]{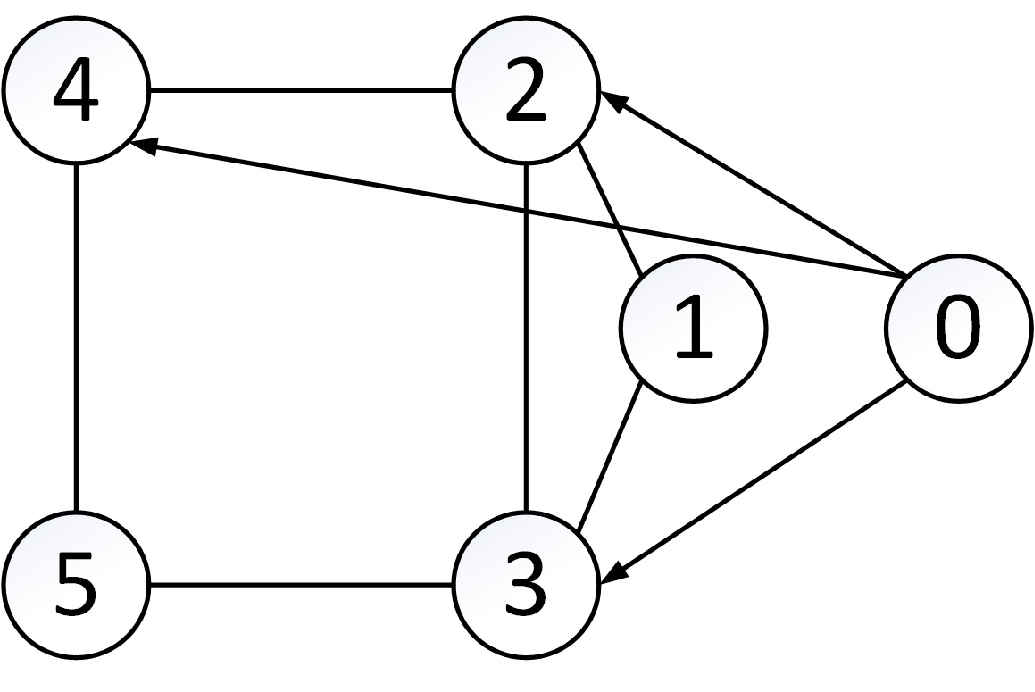}    
		\caption{The communication topology. }
		\label{fig:topology}
	\end{center}
\end{figure}

\begin{figure}
	\begin{center}
		\includegraphics[width=8.0cm]{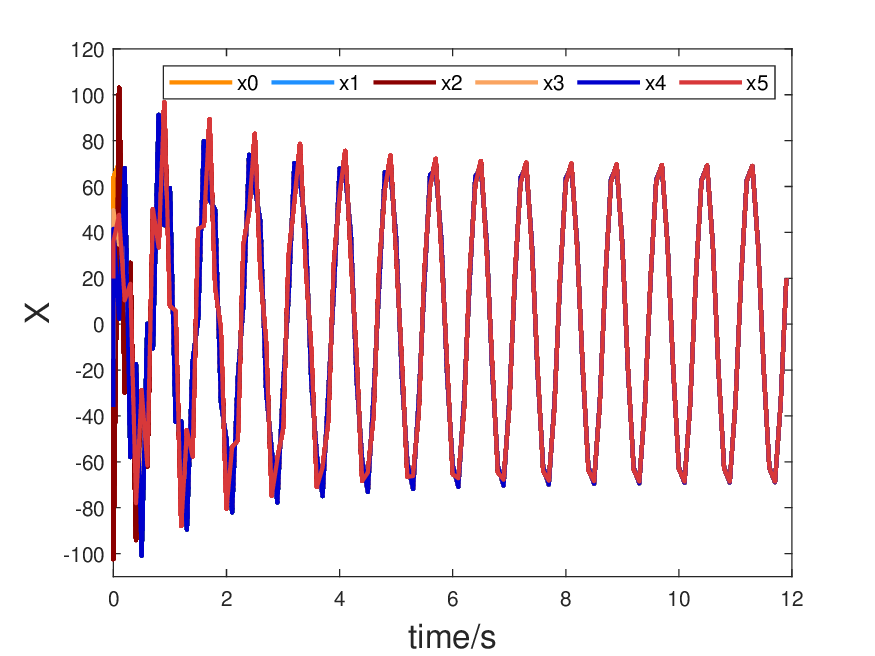}    
		\caption{Trajectories of agents on the X-axis for the case of noiseless data. }
		\label{fig:consensus_error_real}
	\end{center}
\end{figure}

\begin{figure}
	\begin{center}
		\includegraphics[width=8.0cm]{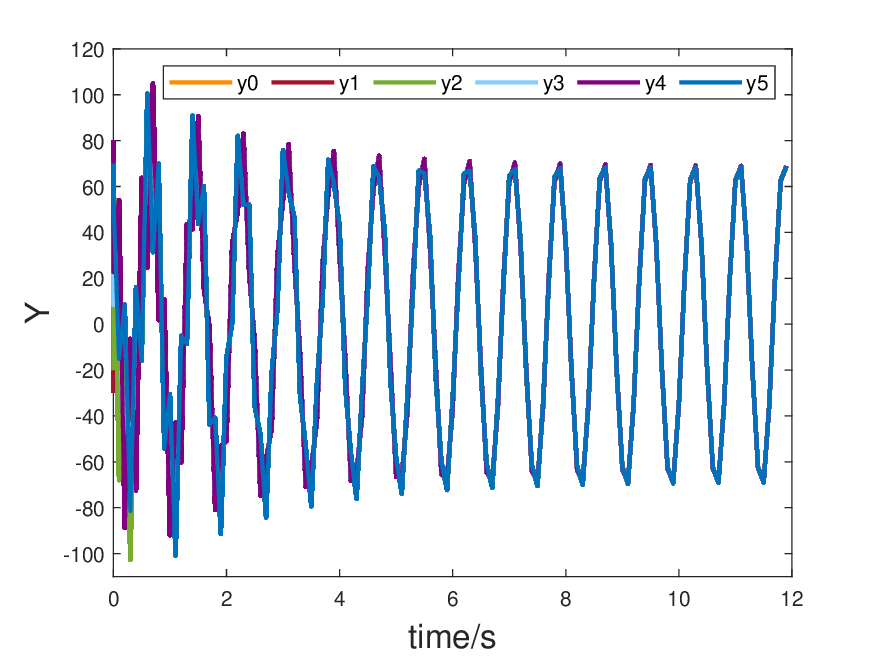}    
		\caption{Trajectories of agents on the Y-axis for the case of noiseless data.}
	\end{center}
\end{figure}

In the case of noiseless data, we generate data with random initial conditions and by applying to each input channel a random sequence. Next,  we utilize CVX \cite{CVX} to solve (\ref{equ:K0}), yielding initial data-based $K_i(0)$ as follows:
\begin{equation}\nonumber
\begin{split}
K_{1}(0)=\begin{bmatrix}
-5.2701 & -4.9733\\
6.5984 & -4.2020
\end{bmatrix}, \\
K_{2}(0)=\begin{bmatrix}
-6.7275 & -5.7100\\
6.6883 & -3.8911
\end{bmatrix}, \\
K_{3}(0)=\begin{bmatrix}
-3.9119 & -3.7200\\
3.5389 & -3.8300
\end{bmatrix}, \\
K_{4}(0)=\begin{bmatrix}
-5.8621 & -4.6591\\
4.1762 & -3.3509
\end{bmatrix}, \\
K_{5}(0)=\begin{bmatrix}
-3.6962 & -3.3449\\
4.0185 & -3.9898
\end{bmatrix}.
\end{split}
\end{equation}
It then follows that the upper bound of the consensus region in Theorem \ref{thm3} is calculated as $\frac{1}{\sigma_{\rm max}(\mathcal{F}^{-1/2}\mathcal{R}\mathcal{F}^{-1/2})}=0.8519$. Note that all  eigenvalues of $\bar{L}=(I+\mathcal{D}_{ff})^{-1}L_{ff}$ lie within the graph circle $\bar{C}(1,0.3)$, which satisfies the constraint  in Theorem \ref{thm3}.
By substituting the calculated $K_i(0)$ into (\ref{equ:1}),  the state trajectories of the controlled agents are demonstrated in Fig. 2 and Fig. 3. As depicted in these figures, the proposed protocols (\ref{equ:dis_control_law}) successfully achieve consensus with noiseless data. 

For the case of noisy data, we add energy-bounded noises, drawn randomly from a Gaussian distribution with zero mean and unit variance, to the measurements of the agents' dynamics. The noise signals, denoted as $D_i$, adhere to the constraint in (\ref{equ:noise_constraint}), where $N_{11}=0.1I$, $N_{22}=-I$, and $N_{12}=N_{21}=0$. Solving (\ref{equ:S-condition}) via CVX also yields distinct initial feedback gain matrices $K_i(0)$ as below:
\begin{equation}\nonumber
\begin{split}
K_{0}(0)=\begin{bmatrix}
-0.1248 & -0.0460\\
0.0005 & -0.2301
\end{bmatrix}, \\
K_{1}(0)=\begin{bmatrix}
-0.1212 & -0.0401\\
0.0052 & -0.2271
\end{bmatrix},\\
K_{2}(0)=\begin{bmatrix}
-0.1246 & -0.0236\\
0.0028 & -0.2261
\end{bmatrix}, \\
K_{3}(0)=\begin{bmatrix}
-0.1241 & -0.0336\\
-0.0026 & -0.2292
\end{bmatrix},\\
K_{4}(0)=\begin{bmatrix}
-0.1250 & -0.0194\\
0.0071 & -0.2237
\end{bmatrix},\\
K_{5}(0)=\begin{bmatrix}
-0.1213 & -0.0140\\
0.0065 & -0.2132
\end{bmatrix}.
\end{split}
\end{equation}
Subsequently, we obtain the state trajectories of agents as shown in Fig. 4 and Fig. 5. It is manifest from these figures that the agents in (\ref{equ:1})   reach consensus under the proposed protocols (\ref{equ:control_law_noise}) in the presence of noise-corrupted data.
\begin{figure}
	\begin{center}
		\includegraphics[width=8.0cm]{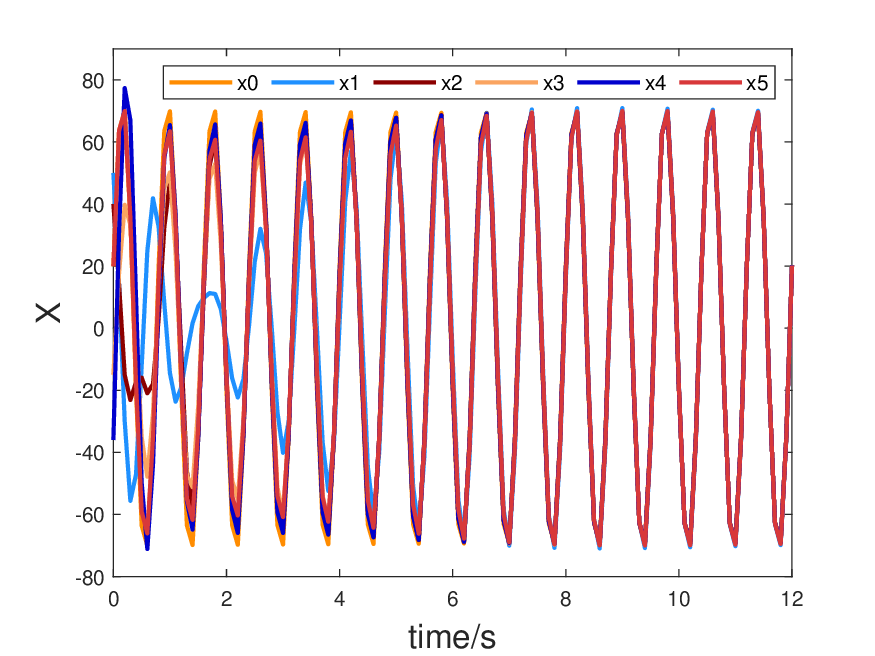}    
		\caption{Trajectories of agents on the X-axis for the case of noisy data.}
	\end{center}
\end{figure}

\begin{figure}
	\begin{center}
		\includegraphics[width=8.0cm]{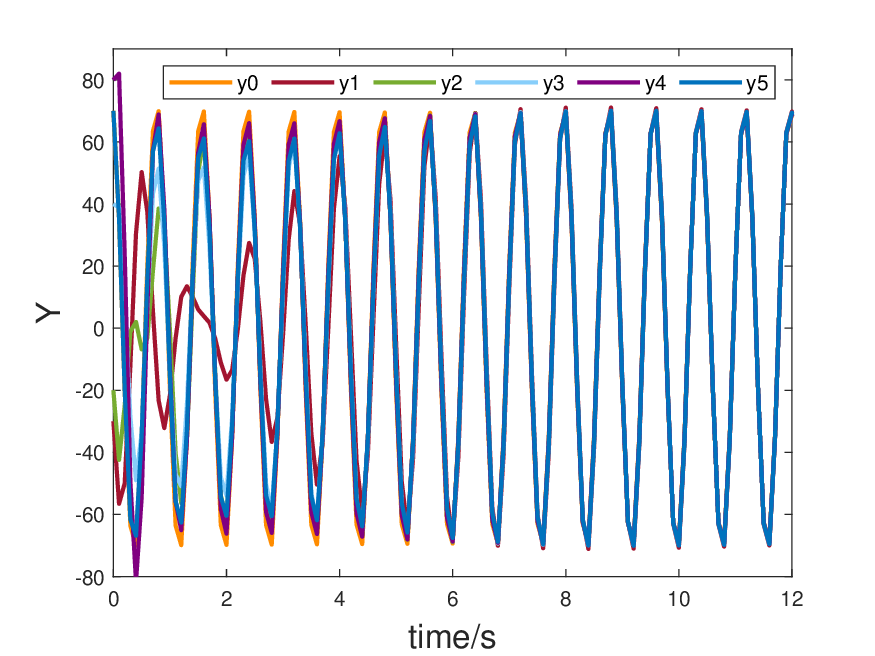}    
		\caption{Trajectories of agents on the Y-axis for the case of noisy data.}
	\end{center}
\end{figure}

\section{Conclusion} \label{sec 6}

In this paper, we have proposed and studied the localized data-driven consensus control problem for leader-follower multi-agent systems, allowing each agent to compute its local control gain with its locally collected data. Both the noiseless and noisy data-driven consensus control problems are addressed by solving low-dimensional LMIs. We have also extended the results to the case where only the leader's data are sampled and utilized.
Potential future research includes data-driven consensus control with output-feedback design, synchronization control for continuous-time multi-agent systems, and the integration of event-triggered mechanisms utilizing sampled data.

\section*{References}
\bibliographystyle{IEEETran}
\bibliography{reference}

\end{document}